\documentclass[conference]{IEEEtran}

\IEEEoverridecommandlockouts

\usepackage[
    n, 
    advantage,
    operators,
    sets,
    adversary,
    landau,
    probability,
    notions,
    logic,
    ff,
    mm,
    primitives,
    events,
    complexity,
    oracles,
    asymptotics,
    keys
]{cryptocode}


\usepackage{xspace}
\newcommand{\kemuauth}{$\mathsf{KEMUAuth}$\xspace}

\usepackage{enumitem}
\setlist[enumerate]{label=\arabic*.}

\usepackage{multirow}
\usepackage{cite}
\usepackage{amsmath,amssymb,amsfonts}
\usepackage{xurl}
\usepackage[hidelinks]{hyperref}
\usepackage{amsthm}

\theoremstyle{plain}
\newtheorem{theorem}{Theorem}[section]
\newtheorem{lemma}{Lemma}[section]
\newtheorem{definition}{Definition}[section]

\usepackage{algorithmic}
\usepackage{graphicx}
\usepackage{textcomp}
\usepackage{url}
\usepackage{xcolor}

\def\BibTeX{{\rm B\kern-.05em{\sc i\kern-.025em b}\kern-.08em
    T\kern-.1667em\lower.7ex\hbox{E}\kern-.125emX}}

\begin{document}

\title{A Drop-in KEM Replacement for Client Signatures in Post-Quantum SSH%
\thanks{%
Extended version of the paper accepted at IEEE ICNP 2026.
\textcopyright~2026 IEEE. Personal use of this material is permitted.
Permission from IEEE must be obtained for all other uses, in any current or
future media, including reprinting/republishing this material for advertising
or promotional purposes, creating new collective works, for resale or
redistribution to servers or lists, or reuse of any copyrighted component of
this work in other works.%
}}

\author{
\IEEEauthorblockN{
Hongbo Liu\IEEEauthorrefmark{1}\IEEEauthorrefmark{2},
Yufan Su\IEEEauthorrefmark{3},
Jiangxia Ge\IEEEauthorrefmark{4},
Qionglu Zhang\IEEEauthorrefmark{1}\IEEEauthorrefmark{2},
Zhaoxuan Li\IEEEauthorrefmark{1}\IEEEauthorrefmark{2},
\\
Xianhui Lu\IEEEauthorrefmark{1}\IEEEauthorrefmark{2},
Li Song\IEEEauthorrefmark{1}\IEEEauthorrefmark{2},
Wenhua Gao\IEEEauthorrefmark{5},
Li Zhou\IEEEauthorrefmark{6}
}

\IEEEauthorblockA{
\IEEEauthorrefmark{1}
State Key Laboratory of Cyberspace Security Defense,
Institute of Information Engineering, CAS, Beijing, China
}

\IEEEauthorblockA{
\IEEEauthorrefmark{2}
School of Cyber Security,
University of Chinese Academy of Sciences, Beijing, China
}

\IEEEauthorblockA{
\IEEEauthorrefmark{3}
School of Control and Computer Engineering,
North China Electric Power University, Beijing, China
}

\IEEEauthorblockA{
\IEEEauthorrefmark{4}
China Telecom Quantum Information Technology Group Co., Ltd.,
Hefei, China
}

\IEEEauthorblockA{
\IEEEauthorrefmark{5}
Beijing Certificate Authority Co., Ltd. (BJCA),
Beijing, China
}

\IEEEauthorblockA{
\IEEEauthorrefmark{6}
Institute of Software,
Chinese Academy of Sciences, Beijing, China
}

\IEEEauthorblockA{
Corresponding authors: Qionglu Zhang and Zhaoxuan Li
(\{zhangqionglu,lizhaoxuan\}@iie.ac.cn)
}
}

\maketitle

\begin{abstract}
The transition to post-quantum cryptography is reshaping the Secure Shell (SSH) protocol for remote administration. Post-quantum key exchange has been deployed in OpenSSH and is being standardized, while SSH authentication largely remains a signature-replacement effort. This path preserves the familiar public-key credential model, but inherits the size and computation overhead of post-quantum signatures, which can increase latency, traffic, and server-side load. KEM-based authentication offers a natural alternative to this signature-centric path, and SSH makes this especially attractive at the user-authentication layer, which is method-extensible, separated from transport-layer key exchange and host-key authentication, and already protected by the established channel.

We present a drop-in KEM-based user-authentication method for SSH that replaces client public-key signatures with a session-bound challenge--response proof. The method fits into SSH's existing user-authentication framework, preserving the public-key credential model and enabling incremental deployment alongside existing methods. We provide a reduction-based security argument in the post-quantum ACCE framework, implement the design in OpenSSH using liboqs, and evaluate it under representative RTTs, TCP initial-window settings, and post-quantum migration configurations. Our results show that KEM-based authentication is competitive with compact signature-based authentication under representative network settings, while reducing median handshake latency by up to about 10\% against large-signature hybrid baselines. The advantages are clearer when post-quantum signatures stress transmission or computation: median latency under small TCP initial windows falls by up to 7.3\% versus ML-DSA and 17.9\% versus SLH-DSA, while server-side online cryptographic cost is 59.1\% lower than that for ML-DSA in the same NIST category.
\end{abstract}

\begin{IEEEkeywords}
post-quantum SSH, KEM-based authentication, user authentication, ACCE security.
\end{IEEEkeywords}

\bstctlcite{IEEEexample:BSTcontrol}

\section{Introduction}
\label{sec:intro}

The Secure Shell (SSH) protocol is widely deployed for secure remote administration. Its security relies on public-key mechanisms to establish the encrypted transport channel, authenticate the server, and commonly authenticate users through public-key credentials inside that channel~\cite{rfc4253,acce,pqacce}. As quantum computing threatens the assumptions underlying these mechanisms, SSH deployments and standards are beginning to migrate toward post-quantum cryptography~\cite{ms_pqssh}.

This transition has so far been asymmetric. On the key-exchange side, post-quantum KEMs fit the role of establishing fresh shared secrets, and SSH has already begun to adopt them. OpenSSH has deployed transport-layer methods such as \texttt{mlkem768x25519-sha256}~\cite{openssh}, and related SSH key-exchange mechanisms are moving through standardization~\cite{draft_ietf_sshm_mlkem_hybrid,draft_harrison_sshm_mlkem}. Authentication has followed a more conservative path. Existing prototypes and standardization efforts largely preserve the \texttt{publickey} method by replacing classical signatures, such as Ed25519~\cite{rfc8709} and RSA~\cite{rfc8332}, with post-quantum signatures. This path is easy to integrate, but it also makes authentication inherit their size and verification profile. Prior measurements show that signature-based post-quantum SSH handshakes can increase latency by up to 50\% in the worst case under realistic network conditions, especially when large authentication objects exceed the TCP initial congestion window~\cite{assesstlsssh,study2026}. Password authentication avoids client-side signatures, but gives up the registered public-key credential model and its public-key proof-of-possession property. SSH therefore lacks a non-signature option with these properties~\cite{assesstlsssh}.

KEMs provide a natural basis for authentication because encapsulation to a registered public key allows only the corresponding secret-key holder to derive the secret needed for a session-bound response. Signature-free authentication has long appeared in AKE and secure-channel designs~\cite{krawczyk1996skeme,bellare1998modular,perrin2016doubleratchet,perrin2018noise,donenfeld2017wireguard}, and recent post-quantum work has turned it into practical protocol designs~\cite{hulsing2021pqwireguard,kemtls}. These designs show that KEM-based authentication can reduce bandwidth and server CPU costs in post-quantum settings, as illustrated by KEMTLS and the AuthKEM IETF draft~\cite{kemtls,authkem}. However, they typically integrate KEM credentials into the handshake or key schedule, so deployment requires changes to core protocol messages, transcript binding, key derivation, and interoperability mechanisms in existing implementations.

SSH makes this picture less rigid. Its architecture separates user authentication from the transport handshake that establishes the encrypted channel and authenticates the server. By the time user authentication begins, the channel keys and session identifier have already been established, and the authentication exchange runs inside the protected channel~\cite{rfc4251,rfc4252}. At the same time, this layer is method-extensible and policy-driven~\cite{rfc4252}, and SSH already accommodates challenge--response mechanisms such as \texttt{keyboard-interactive}~\cite{rfc4256}. This combination of separation and extensibility suggests a more local path for bringing KEM-based authentication into SSH within its existing authentication framework. 

\subsection{Our Contributions}
\label{subsec:contributions}

 In this paper, we present a drop-in user-authentication method that adapts KEM-based authentication to post-quantum SSH. We study  this adaptation end to end through a post-quantum ACCE analysis, an OpenSSH implementation, and controlled network evaluation. Our contributions are summarized as follows:

\textbf{Drop-in KEM authentication without extra SSH round trips.}
 We design \kemuauth to preserve SSH's public-key credential model while replacing client signatures with a deniable, session-bound proof of KEM secret-key possession. The proof is bound to the SSH session identifier, the challenge ciphertext, and the authentication context, giving it the same session-binding role as the signed request in the existing \texttt{publickey} method. \kemuauth follows the commonly used two-step \texttt{publickey} flow with \texttt{USERAUTH\_PK\_OK} without adding SSH round trips. At the user-authentication layer, \kemuauth allows servers to authorize KEM credentials alongside existing signature keys and to combine them with SSH's existing multi-method authentication policies.

\textbf{Post-quantum ACCE security for KEMUAuth.}
We analyze SSH with \kemuauth in the post-quantum ACCE framework, building on prior ACCE analyses of SSH~\cite{pqacce}. The main new proof obligation is server acceptance of a client authenticated through \kemuauth, which extends the existing server-authenticated SSH analysis to mutual authentication across the transport and user-authentication layers. We reduce this case to the IND-CCA security of the client's long-term KEM credential and the PRF security of the session-bound response function. Since \kemuauth runs inside the established encrypted channel and does not affect transport-layer key derivation, the authentication argument composes with the existing SSH channel-security proof.

\textbf{Practical implementation and controlled migration evaluation.}
We use liboqs~\cite{liboqs} to implement \kemuauth as a new OpenSSH user-authentication method that coexists with signature credentials and supports standalone KEM, hybrid, and SSH-native multi-step migration configurations. We evaluate it under controlled network emulation across representative RTTs, TCP initial-window settings, and practical post-quantum authentication choices, measuring handshake latency, transmission effects, and server-side online authentication cost. The results show that \kemuauth closely matches Ed25519 and ML-DSA in end-to-end latency under common network settings, with clearer benefits emerging when post-quantum signatures impose larger transmission or verification costs. In particular, under small TCP initial windows, it reduces median handshake latency by 7.0--7.3\% relative to the ML-DSA baseline and lowers the measured server-side online cryptographic cost by 59.1\% compared with ML-DSA in the same NIST security category.

\subsection{Related Work}
\label{subsec:related}

Post-quantum SSH migration has produced a clear division between key exchange and authentication. For key exchange, KEM-based mechanisms have moved from prototypes into deployment and standardization. Mainline OpenSSH treats post-quantum key agreement as a baseline capability, and large-scale services such as GitHub have enabled hybrid post-quantum SSH key exchange for production access~\cite{openssh_pq,github_pqssh}. Standardization has followed this direction, with RFC~9941 documenting the widely deployed \texttt{sntrup761x25519-sha512} hybrid KEX method and ongoing Internet-Drafts specifying ML-KEM-based SSH key-exchange variants~\cite{rfc9941,draft_ietf_sshm_mlkem_hybrid,draft_harrison_sshm_mlkem}. For authentication, existing proposals mainly preserve SSH's signature-based public-key model by adding post-quantum signature formats, including ML-DSA public keys, composite ML-DSA signatures, and SSH profiles that mandate ML-DSA-based authentication~\cite{draft_sfluhrer_ssh_mldsa,draft_rpe_ssh_mldsa,draft_becker_cnsa2_ssh_profile}. Research prototypes and measurements mirror this division, evaluating post-quantum KEX and signature-based authentication in OQS-OpenSSH and liboqs-based prototypes, and quantifying their latency, bandwidth, and deployment impact~\cite{oqsopenssh,crockett2019prototyping,assesstlsssh,zavacke2025practical}.

Formal analyses of post-quantum SSH have primarily treated the transport layer as the security-critical boundary. Tran et~al. give a symbolic AKE-style analysis of hybrid SSH key exchange~\cite{tran2024formal}, while Ben\v{c}ina et~al. provide a post-quantum ACCE analysis of SSH that captures \textit{harvest-now, decrypt-later} attacks~\cite{pqacce}. Other formal verification work similarly focuses on hybrid key exchange and the confidentiality or integrity of the resulting channel~\cite{blanchet2024post}. These analyses provide the foundation for reasoning about post-quantum SSH transport security, but their authentication guarantees are centered on the transport-layer exchange and server authentication, rather than on proof of possession in the user-authentication layer. A related but orthogonal line is the privacy-preserving SSH authentication scheme of Roy et~al., which uses anonymous multi-KEM and private set intersection to hide authorized-key membership~\cite{roy2022ssh}. Its goal is authorized-key privacy, rather than a general post-quantum KEM credential for standard SSH user authentication.

KEM-based authentication has been explored beyond SSH in AKE and secure-channel protocols. In TLS-style designs, KEMTLS~\cite{kemtls} and AuthKEM~\cite{authkem} replace handshake signatures with certified long-term KEM public keys, while post-quantum WireGuard~\cite{hulsing2021pqwireguard} redesigns its handshake to support KEM-based authentication. Similar directions are being considered for IKEv2/IPsec and EDHOC, including KEM-based, signature-free, and mixed signature/KEM authentication mechanisms for constrained or migration-oriented settings~\cite{draft_wang_ipsecme_kem_auth_ikev2,draft_pocero_authkem_edhoc,draft_pocero_authkem_ikr_edhoc}. At the AKE-construction level, protocols such as PQuAKE~\cite{draft_uri_lake_pquake}, Muckle\#~\cite{battarbee2025mucklesharp}, and HetAKE~\cite{heterogeneousake} study implicit, hybrid, or heterogeneous authentication mechanisms for post-quantum migration. Together, these works establish KEM-based authentication as a viable design paradigm, while also illustrating that existing approaches typically require redesigning the handshake or key schedule to place KEM credentials inside the core protocol flow.

Overall, existing post-quantum SSH work has treated KEMs primarily as transport-layer key-exchange mechanisms, while existing KEM-authentication work has largely placed KEM credentials inside AKE handshakes or key schedules. This leaves SSH's user-authentication layer as a distinct and comparatively underexplored point in the design space for post-quantum KEM-based authentication.

\section{Preliminaries}
\label{sec:preliminary}

\subsection{Notation}
\label{subsec:nota}

The security parameter is denoted by $\lambda\in\mathbb{N}$. For a finite set $S$, $x \xleftarrow{\$} S$ denotes sampling $x$ uniformly at random from $S$, and $y \leftarrow \mathsf{Alg}(x)$ denotes assignment of the output of algorithm $\mathsf{Alg}$ on input $x$. Unless stated otherwise, probabilities are taken over the randomness of the experiment, the algorithms, and the adversary. For bit strings $x$ and $y$, $x\parallel y$ denotes concatenation, and $\{0,1\}^{\ell}$ denotes the set of $\ell$-bit strings. The symbol $\bot$ denotes failure or rejection. PPT and QPT stand for probabilistic polynomial-time and quantum polynomial-time, respectively. We write $\mathsf{negl}(\lambda)$ for an unspecified negligible function, and $\mathsf{Adv}^{\mathsf{notion}}_{\Pi}(\mathcal{A})$ for the advantage of adversary $\mathcal{A}$ against primitive or protocol $\Pi$ in security notion $\mathsf{notion}$.

\subsection{KEMs}
\label{subsec:kem}

A key encapsulation mechanism (KEM) is a public-key primitive that allows a party to establish a shared secret with the holder of a corresponding secret key. Formally, a KEM consists of three probabilistic polynomial-time algorithms
\[
    \mathsf{KEM} = (\mathsf{KeyGen}, \mathsf{Encap}, \mathsf{Decap})
\]
defined as follows:
\begin{itemize}
    \item $\mathsf{KeyGen}(1^\lambda)$ outputs a public/secret key pair $(pk,sk)$.
    \item $\mathsf{Encap}(pk)$ outputs a ciphertext $ct$ and a shared secret $K$.
    \item $\mathsf{Decap}(sk,ct)$ outputs either a shared secret $K'$ or the failure symbol $\bot$.
\end{itemize}
Correctness requires that, except with negligible probability, if $(pk,sk)\leftarrow \mathsf{KeyGen}(1^\lambda)$ and $(K,ct)\leftarrow \mathsf{Encap}(pk)$, then $\mathsf{Decap}(sk,ct)=K$.

The standard security requirement is indistinguishability of the encapsulated key from random. For $\mathsf{atk}\in\{\mathsf{cpa},\mathsf{cca}\}$, let
$G^{\mathsf{ind}\mbox{-}\mathsf{atk}}_{\mathsf{KEM},\mathcal{A}}$
denote the usual KEM indistinguishability experiment for adversary $\mathcal{A}$, where $\mathcal{A}$ receives a challenge ciphertext $ct^\star$ together with either the real encapsulated key or an independent random key. In the $\mathsf{IND}\mbox{-}\mathsf{CPA}$ experiment, $\mathcal{A}$ has no decapsulation access, while in the $\mathsf{IND}\mbox{-}\mathsf{CCA}$ experiment it may query a decapsulation oracle except on $ct^\star$. The advantage of $\mathcal{A}$ is
\[
  \mathsf{Adv}^{\mathsf{ind}\mbox{-}\mathsf{atk}}_
  {\mathsf{KEM}}(\mathcal{A})
  =
  \left|
  \Pr\!\left[
    G^{\mathsf{ind}\mbox{-}\mathsf{atk}}_
    {\mathsf{KEM},\mathcal{A}}(1^\lambda) = 1
  \right]
  - \frac{1}{2}
  \right|.
\]
A KEM is $\mathsf{ind}\mbox{-}\mathsf{atk}$ secure if this advantage is negligible for every efficient adversary $\mathcal{A}$.
\par
In our SSH analysis, the ephemeral KEM used in transport-layer key exchange requires $\mathsf{IND\mbox{-}CPA}$ security, while the long-term client KEM used for authentication requires the stronger $\mathsf{IND\mbox{-}CCA}$ security.
Formal experiments for these notions, together with the other cryptographic assumptions used in the analysis, are collected in Appendix~\ref{app:defs}.

\subsection{Baseline SSH Publickey Flow}
\label{subsec:pq-ssh-overview}

\begin{figure}[htbp]
\centering
\includegraphics[width=\columnwidth]{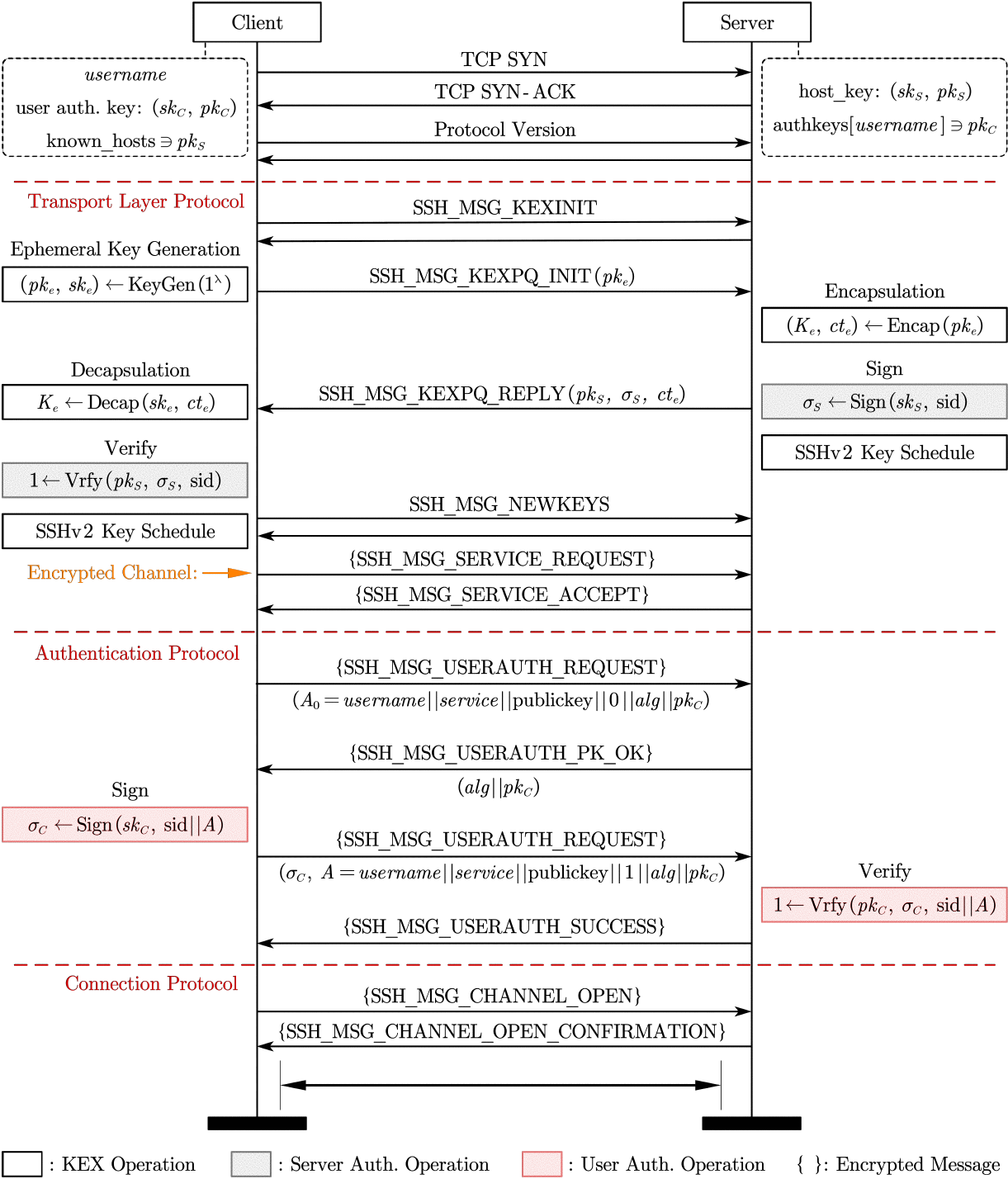}
\caption{Baseline SSH flow with post-quantum transport-layer key exchange and signature-based \texttt{publickey} user authentication. The flow follows RFC~4253~\cite{rfc4253} and RFC~4252~\cite{rfc4252}, abstracting recent OpenSSH post-quantum key-exchange suites~\cite{openssh_pq}. The key schedule is simplified: the first key exchange defines the session identifier $\mathsf{sid}$ as the exchange hash, which binds the version strings, \texttt{KEXINIT} messages, server host key, key-exchange values, and shared secret. The figure expands only the \texttt{publickey} authentication layer, where the client signs $\mathsf{sid}$ concatenated with the authentication request context $A$, excluding the signature field.}
\label{fig:pq-ssh-overview}
\end{figure}

Figure~\ref{fig:pq-ssh-overview} summarizes the SSH baseline considered in this work. SSH consists of the Transport Layer Protocol, the User Authentication Protocol, and the Connection Protocol. In post-quantum migration, the key-exchange and authentication primitives may be replaced by post-quantum or hybrid alternatives, while this layering remains unchanged. The transport layer establishes the session identifier $\mathsf{sid}$ and the encrypted channel. User authentication verifies the client inside that channel. Our design targets the signature proof in the \texttt{publickey} method, the dominant baseline for public-key user authentication.

During the transport phase, the client and server exchange protocol versions, negotiate algorithms via \texttt{SSH\_MSG\_KEXINIT}, and perform key exchange with server host-key authentication. After \texttt{SSH\_MSG\_NEWKEYS}, all subsequent service-request, user-authentication, and connection-layer messages are protected by the established encrypted channel. In the standard \textsf{publickey} method, the client typically first sends an unsigned \texttt{SSH\_MSG\_USERAUTH\_REQUEST} containing the username, requested service, method name, public-key algorithm, and public key. If the server considers the key acceptable, it replies with \texttt{SSH\_MSG\_USERAUTH\_PK\_OK}. The client then sends a signed \texttt{SSH\_MSG\_USERAUTH\_REQUEST}, where the signature $\sigma_C$ is computed over $\mathsf{sid}$ and the authentication request context $A$. SSH also permits a direct signed request without the preliminary acceptability query, but the two-step form is commonly used to avoid unnecessary signing or key-unlocking operations. Once user authentication succeeds, the connection protocol carries channel-opening and application data over the secure channel.

\subsection{ACCE Security Model}
\label{subsec:security-model}

We analyze our protocol in the authenticated and confidential channel establishment (ACCE) framework. An ACCE protocol first establishes peer identities, a session identifier, and channel keys during the handshake, and then uses the derived keys to protect application data with stateful authenticated encryption. This abstraction matches SSH: the transport layer establishes $\mathsf{sid}$ and channel keys, user authentication completes client authentication inside the protected channel, and the connection protocol carries application data. 
Our model follows the SSH ACCE formulation of~\cite{pqacce}, including its session variables, adversarial oracle interface, and matching-session structure, with the full experiment and our mutual-authentication extension specified in Appendix~\ref{app:acce}.

ACCE security consists of authentication and channel security. Authentication rules out malicious acceptance. A session $\pi_i^s$ accepts maliciously if it accepts believing that its peer is an uncorrupted party, but there is no unique matching session at that peer with the corresponding role, ciphersuite, and session identifier. In our SSH setting, this condition is required in both directions, meaning that the client must not accept without an honest matching server session and the server must not accept without an honest matching client session. For an adversary $\mathcal{A}$, we define
\[
  \mathsf{Adv}^{\mathsf{acce\mbox{-}mu\mbox{-}auth}}_{\mathsf{SSH}}(\mathcal{A})
  =
  \Pr\!\left[
    \exists\, \pi_i^s :
    \pi_i^s \text{ accepts maliciously}
  \right].
\]
Authentication security means that this probability is negligible. The post-quantum authentication definition uses the same malicious-acceptance event, but considers QPT adversaries.

Channel security captures confidentiality and integrity after acceptance. The adversary may interact with accepted sessions through encryption and decryption queries and eventually chooses a fresh accepted test session. Freshness means, informally, that the adversary has not revealed the test session key, has not revealed the key of its matching partner, and has not corrupted the relevant peer before acceptance. The test session contains a hidden bit $b$ that determines whether the encryption oracle returns real encryptions or challenge encryptions. If $\mathcal{A}$ outputs a guess $b'$, we write
\[
  p =
  \Pr\!\left[
    b'=b
    \,\middle|\,
    \pi_i^s \text{ is a fresh accepted test session}
  \right],
\]
and define
\[
  \mathsf{Adv}^{\mathsf{acce\mbox{-}mu\mbox{-}aenc}}_{\mathsf{SSH}}(\mathcal{A})
  =
  \left| p - \frac{1}{2} \right|.
\]
Channel security requires this advantage to be negligible for any efficient adversary.

For post-quantum channel security, the experiment models ``\textit{harvest now, decrypt later}'' attacks by restricting the handshake phase to honest executions. Instead of actively driving sessions with a send oracle, a QPT adversary obtains honestly generated transcripts through an execution oracle and may later issue reveal, corrupt, encryption and decryption queries subject to the same freshness conditions. Since these executions already have honest matching partners, the experiment focuses on confidentiality and integrity of the resulting channel. We denote the corresponding advantage by
\[
  \mathsf{Adv}^{\mathsf{pq\mbox{-}acce\mbox{-}aenc}}_{\mathsf{SSH}}(\mathcal{Q})
  =
  \left| p_{\mathsf{pq}} - \frac{1}{2} \right|,
\]
where $p_{\mathsf{pq}}$ is the probability that $\mathcal{Q}$ correctly guesses the hidden channel challenge bit for a fresh accepted session.

\section{KEM-Based User Authentication for SSH}
\label{sec:kem-auth}

This section presents \kemuauth, a KEM-based SSH user-authentication method that coexists with the existing signature-based \texttt{publickey} method. Within \kemuauth, the signature proof of possession normally used in public-key user authentication is replaced by a session-bound KEM response. As shown in Figure~\ref{fig:kem_auth}, \kemuauth is invoked only after the transport layer has established the encrypted channel and the session identifier. It therefore leaves SSH key exchange, host-key authentication, and connection-layer semantics unchanged, and confines the change to the user-authentication proof.

\begin{figure}[htbp]
\centering
\includegraphics[width=\columnwidth]{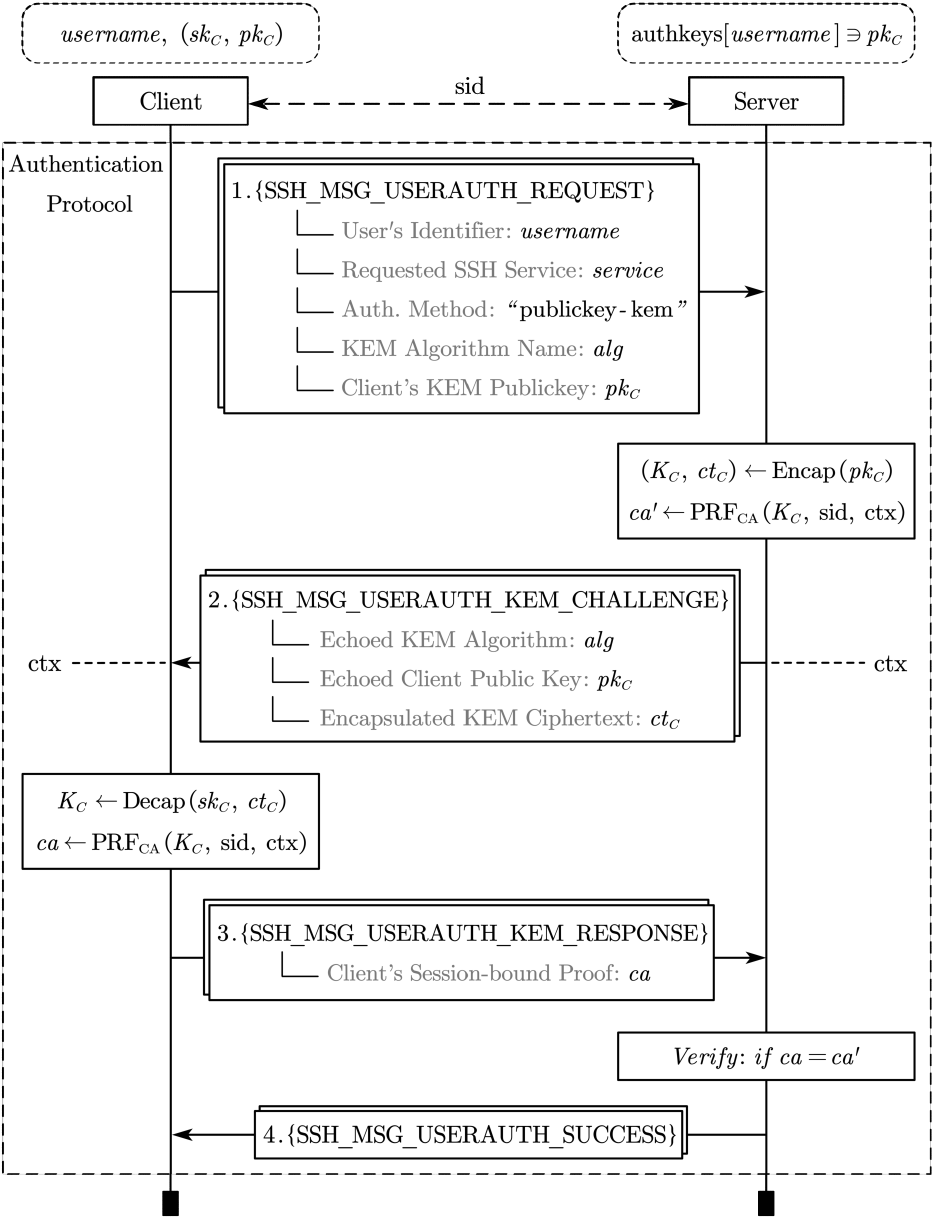}

\caption{KEM-based user authentication for SSH. Before this stage, the transport layer has established the session identifier $\mathsf{sid}$ and an encrypted channel. Messages in $\{\}$ are channel-protected. The server authorizes $pk_C$ for the claimed username, while the client holds the corresponding secret key $sk_C$. Grey text denotes field annotations rather than messages.}
\label{fig:kem_auth}
\end{figure}

At the start of user authentication, the client holds a long-term KEM key pair $(sk_C,pk_C)$, and the server maintains authorized public keys for each username, typically via an \texttt{authorized\_keys}-style database. The client begins by sending an encrypted \texttt{SSH\_MSG\_USERAUTH\_REQUEST} using the provisional method name \texttt{publickey-kem}. This request contains the username, requested service, KEM algorithm identifier $alg$, and the client's KEM public key $pk_C$. The server checks whether $pk_C$ is authorized for the claimed username and whether $alg$ is acceptable. If either check fails, the server returns \texttt{SSH\_MSG\_USERAUTH\_FAILURE}. Otherwise, it proceeds with the KEM challenge.

To challenge the client, the server runs
\[
  (K_C,ct_C) \leftarrow \mathsf{Encap}(pk_C)
\]
and computes the expected response
\[
  ca' \leftarrow \mathsf{PRF}_{\mathsf{CA}}(K_C,\mathsf{sid},\mathsf{ctx}).
\]
Let $\mathsf{msg}$ be the \kemuauth method transcript in Figure~\ref{fig:kem_auth}. It includes the initial request fields $(username,service,\texttt{publickey-kem}, alg,pk_C)$, the challenge fields $(alg,pk_C,ct_C)$, and the response message type, but not the response value $ca$. We define
\[
  \mathsf{ctx}=\mathsf{Encode}_{\mathsf{SSH}}(\mathsf{msg}).
\]
The server sends \texttt{SSH\_MSG\_USERAUTH\_KEM\_CHALLENGE}, which echoes $\mathsf{alg}$ and $pk_C$ and carries $ct_C$. Echoing these fields allows the client to match the challenge to the pending request encoded in $\mathsf{ctx}$ and avoids ambiguity across retried or concurrent authentication attempts.

Upon receiving the challenge, the client first checks that the echoed algorithm and public key match its pending authentication request. It then decapsulates
\[
  K_C \leftarrow \mathsf{Decap}(sk_C,ct_C).
\]
If decapsulation fails, the client rejects the challenge. Otherwise, it computes
\[
  ca \leftarrow \mathsf{PRF}_{\mathsf{CA}}(K_C,\mathsf{sid},\mathsf{ctx})
\]
and returns $ca$ in \texttt{SSH\_MSG\_USERAUTH\_KEM\_RESPONSE}. The server accepts if and only if $ca=ca'$, in which case it returns \texttt{SSH\_MSG\_USERAUTH\_SUCCESS}. Otherwise, it returns \texttt{SSH\_MSG\_USERAUTH\_FAILURE}. The explicit response is necessary because the server must verify possession of the decapsulated secret without feeding that secret into SSH transport key derivation.

We instantiate the client-authentication response function as
\[
  ca :=
  \mathsf{PRF}_{\mathsf{CA}}(K_C,\mathsf{sid},\mathsf{ctx})
  =
  \mathsf{HMAC}\!\left(
    K_C,\,
    \mathsf{label} \parallel
    \mathsf{sid} \parallel
    \mathsf{ctx}
  \right),
\]
where $\mathsf{label}$ is a fixed domain-separation label, such as \texttt{"ssh-publickey-kem-client-auth"}. The fields in $\mathsf{ctx}$ are encoded using SSH's length-prefixed binary string format, so the response is computed over an unambiguous representation of the authentication attempt. The concrete response length and hash function are specified by the advertised \kemuauth algorithm identifier.

The binding to $\mathsf{sid}$ and $\mathsf{ctx}$ gives the response the same session-binding role as the signature in the standard \texttt{publickey} method. The session identifier prevents replay across SSH sessions, while the authentication context binds the proof to the selected method, user, service, algorithm, public key, challenge ciphertext, and message flow. Since only the holder of $sk_C$ can recover $K_C$ from $ct_C$, a valid response proves possession of the authorized KEM secret key for this SSH authentication attempt. However, because the server also knows $K_C$ and can generate the same response, the resulting transcript provides deniable online authentication rather than transferable evidence of a client login. As a result, \kemuauth replaces the client signature with a KEM challenge--response proof, while requiring only one new authentication method and two method-specific user-authentication messages. Existing signature credentials and multi-method authentication policies remain available for incremental deployment. 

\section{Security Analysis}
\label{sec:security}

We now analyze SSH with \kemuauth, the KEM-based user-authentication method described in Section~\ref{sec:kem-auth}. Since \kemuauth leaves transport-layer key exchange, server host-key authentication, and channel-key derivation unchanged, the main new proof obligation is server acceptance of a client authenticated through the KEM-based response at the user-authentication layer. We first prove mutual ACCE authentication, then derive classical and post-quantum channel security by composing with the existing SSH ACCE analysis, and finally discuss the security requirements on $\mathsf{PRF}_{\mathsf{CA}}$.

\subsection{ACCE Authentication Security}
\label{subsec:auth-secure}

For the transport and server-authentication components, we adopt the assumptions of the post-quantum SSH ACCE analysis in~\cite{pqacce}: a collision-resistant hash function $\mathsf{H}$, an $\mathsf{EUF\text{-}CMA}$-secure digital signature scheme $\mathsf{DSS}$, a secure pseudorandom function $\mathsf{PRF}_{\mathsf{SSH}}$, a secure stateful authenticated encryption scheme $\mathsf{BSAE}$, and an $\mathsf{IND\text{-}CPA}$-secure ephemeral $\mathsf{KEM}_{\mathsf{e}}$ for transport-layer key exchange. Our client-authentication extension additionally requires the long-term client $\mathsf{KEM}_{\mathsf{C}}$ to be $\mathsf{IND\text{-}CCA}$ secure and the response function $\mathsf{PRF}_{\mathsf{CA}}$ to satisfy the standard PRF security notion.
Formal definitions of these primitives and security notions are collected in Appendix~\ref{app:defs}.

Our authentication definition follows the ACCE formulation of~\cite{pqacce}, but extends the malicious-acceptance event to mutual authentication. In~\cite{pqacce}, malicious acceptance is counted only for initiator sessions, since the analyzed mode authenticates the server to the client but not the client to the server. Since \kemuauth adds user-authentication-layer client authentication, we define
$\mathsf{Adv}_{\mathsf{SSH}}^{\mathsf{acce\text{-}mu\text{-}auth}}(\mathcal{A})$
as the probability that any initiator or responder session accepts believing that its peer is an uncorrupted party, but has no unique matching session of the opposite role.

\begin{lemma}[SSH is Mutual ACCE-Authentication-secure]
  \label{lem:mutual-auth}
  Consider SSH with signature-based server authentication and \kemuauth client authentication. Let $\mu$ be the length of the random cookies/nonces exchanged during transport-layer negotiation, let $n_P$ be the number of parties, let $n_S$ be the maximum number of sessions per party, and let $\ell$ be the output length of $\mathsf{PRF}_{\mathsf{CA}}$. For any PPT adversary $\mathcal{A}$, there exist adversaries $\mathcal{B}_1,\dots,\mathcal{B}_7$ such that
    \begin{align*}
    \mathsf{Adv}_{\mathsf{SSH}}^{\mathsf{acce\text{-}mu\text{-}auth}}(\mathcal{A})
    &\le \frac{(n_P n_S)^2}{2^{\mu}} + \mathsf{Adv}_{\mathsf{H}}^{\mathsf{coll}}(\mathcal{B}_1) \\
    \quad &+ n_P^2 n_S \Bigl( 
            \mathsf{Adv}_{\mathsf{DSS}}^{\mathsf{euf\text{-}cma}}(\mathcal{B}_2)
            + \mathsf{Adv}_{\mathsf{KEM}_{\mathsf{e}}}^{\mathsf{ind\text{-}cpa}}(\mathcal{B}_3) \\
    \quad &+ \mathsf{Adv}_{\mathsf{PRF}_{\mathsf{SSH}}}^{\mathsf{prf}}(\mathcal{B}_4)
            + \mathsf{Adv}_{\mathsf{BSAE}}^{\mathsf{bsae}}(\mathcal{B}_5) \\
    \quad &+ \mathsf{Adv}_{\mathsf{KEM}_{\mathsf{C}}}^{\mathsf{ind\text{-}cca}}(\mathcal{B}_6)
            + \mathsf{Adv}_{\mathsf{PRF_{CA}}}^{\mathsf{prf}}(\mathcal{B}_7) + 2^{-\ell} \Bigr).
    \end{align*}
\end{lemma}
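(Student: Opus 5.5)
We split the malicious-acceptance event according to the role of the session that accepts maliciously. The split is $\Pr[\mathsf{break}] \le \Pr[\mathsf{break}_{\mathsf{I}}] + \Pr[\mathsf{break}_{\mathsf{R}}]$. Here $\mathsf{break}_{\mathsf{I}}$ means that some initiator (client) session accepts maliciously, and $\mathsf{break}_{\mathsf{R}}$ means that some responder (server) session does.

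For $\mathsf{break}_{\mathsf{I}}$ we reuse the server-authentication argument of~\cite{pqacce} essentially unchanged. \kemuauth runs only after $\mathsf{sid}$ and the channel keys are fixed, and it uses an independent long-term key $sk_C$. So the simulator in that proof can generate all client KEM keys itself and run \kemuauth honestly. That proof yields the terms $(n_Pn_S)^2/2^{\mu}$ (nonce collisions), $\mathsf{Adv}^{\mathsf{coll}}_{\mathsf{H}}$ (distinct transcripts hashing to the same $\mathsf{sid}$), and $n_P^2n_S\cdot\mathsf{Adv}^{\mathsf{euf\text{-}cma}}_{\mathsf{DSS}}$ after guessing the target session and peer. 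The $\mathsf{KEM}_{\mathsf{e}}$, $\mathsf{PRF}_{\mathsf{SSH}}$ and $\mathsf{BSAE}$ terms are needed because acceptance of the server in SSH is only signalled through encrypted messages. We keep the same bound for them. Because the statement's sum is an upper bound, we may absorb both cases under a single guessing factor $n_P^2 n_S$.

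For $\mathsf{break}_{\mathsf{R}}$ we proceed by a game sequence. \textbf{Game 0} is the original experiment. \textbf{Game 1} aborts on nonce collisions or hash collisions, which costs the first two terms. After this step distinct transcripts give distinct $\mathsf{sid}$. \textbf{Game 2} guesses the target server session $\pi_j^t$ and the claimed client $P_i$, losing a factor $n_P^2n_S$ (one party index, one session index, one peer index). If $P_i$ is corrupted before acceptance, the event does not count.

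\textbf{Game 3} replaces the channel keys of the target session by random keys, using the $\mathsf{KEM}_{\mathsf{e}}$ IND-CPA and $\mathsf{PRF}_{\mathsf{SSH}}$ hops. Then \textbf{Game 4} invokes $\mathsf{BSAE}$ security so that the adversary cannot inject or modify channel ciphertexts except by forwarding honest ones. This hop is only sound when the transport peer is honest. When the adversary itself ran the key exchange with the server, the channel is adversarially keyed and we skip these hops. The next steps do not depend on them, since the $\mathsf{sid}$/$\mathsf{ctx}$ binding alone suffices.

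\textbf{Game 5} replaces $K_C$ encapsulated in the target session's challenge $ct_C^\star$ by a uniform key. To justify this, $\mathcal{B}_6$ embeds the IND-CCA challenge public key as $pk_{C,i}$. It answers every other honest session of $P_i$ that receives a challenge $ct\ne ct_C^\star$ via the decapsulation oracle. A session of $P_i$ that receives $ct_C^\star$ itself but with a different $\mathsf{sid}$ or $\mathsf{ctx}$ uses the challenge key, which is consistent with both worlds. Guessing the game bit then costs $\mathsf{Adv}^{\mathsf{ind\text{-}cca}}_{\mathsf{KEM}_{\mathsf C}}$.

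\textbf{Game 6} replaces $\mathsf{PRF}_{\mathsf{CA}}(K^\star,\cdot)$ by a random function, costing $\mathsf{Adv}^{\mathsf{prf}}_{\mathsf{PRF_{CA}}}$. Honest client sessions that received $ct_C^\star$ query the function on inputs $(\mathsf{sid}',\mathsf{ctx}')$. If an input equals the target's $(\mathsf{sid},\mathsf{ctx})$, that client session has the same $\mathsf{sid}$, which makes it a matching session by the collision game, and it is unique because nonces are distinct. So no break occurs. Otherwise the target value is fresh, and the adversary's response equals it with probability $2^{-\ell}$.

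The main obstacle is the CCA simulation in Game 5. The adversary may deliver $ct_C^\star$ to several client sessions under different $\mathsf{sid}$ values, and it may also corrupt parties other than $P_i$. We must argue carefully that decapsulation queries never hit $ct_C^\star$. We also need a matching session to be defined as exactly equality of $\mathsf{sid}$ plus roles, so that the Game 6 case split is exhaustive.
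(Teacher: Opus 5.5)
Your proposal is correct and follows the paper's route. Both split on the role of the maliciously accepting session. Both use a signature (plus channel) branch for initiator acceptance. For responder acceptance, both embed an IND-CCA challenge into $P_i$'s client key, replace $\mathsf{PRF}_{\mathsf{CA}}$ with a random function, and finish with a $2^{-\ell}$ guessing bound. Two points of yours are more explicit than the paper's Games B1 and B3: answering replays of $ct_C^\star$ to other client sessions of $P_i$ with the challenge key, and deriving freshness of $y^\star$ from the absence of a matching session.

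To land exactly on the stated bound, do the nonce/hash-collision aborts once before the role split. Also drop your Games 3--4 in the responder branch; as you note, nothing later depends on them. Otherwise the collision and channel terms are counted twice.
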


\begin{proof}
The proof proceeds by game hopping. Each hop either aborts on a rare event or reduces a successful malicious acceptance to the security of an underlying primitive.
A complete game sequence is given in Appendix~\ref{app:acce}.

We first exclude collisions in the random values exchanged during transport-layer negotiation, contributing at most $(n_P n_S)^2/2^{\mu}$. We also abort on collisions in the exchange hash, which reduces to the collision resistance of $\mathsf{H}$. We then guess the first maliciously accepting session and, when needed, the peer identity involved in the impersonation; this accounts for the factor $n_P^2 n_S$ in the bound. The remaining analysis splits according to the role of the first accepting session.

\paragraph{Case A: malicious client acceptance}
This case is the server-authentication branch of the argument. After excluding nonce and hash collisions, a client that accepts without a matching server session must either accept a fresh valid host-key signature, yielding an $\mathsf{EUF\text{-}CMA}$ forgery against $\mathsf{DSS}$, or be led to accept through modified protected traffic. In the latter case, we replace the transport-layer KEM secret with a uniformly random value using the $\mathsf{IND\text{-}CPA}$ security of $\mathsf{KEM}_{\mathsf{e}}$, then replace the SSH key-schedule output with random channel keys using the PRF security of $\mathsf{PRF}_{\mathsf{SSH}}$. With independent random channel keys, any successful modification of protected user-authentication messages gives an adversary against the $\mathsf{BSAE}$ security of the channel encryption scheme. Thus, malicious client acceptance is bounded by the host-key signature, transport KEM, SSH key schedule, and channel-encryption terms.

\paragraph{Case B: malicious server acceptance}
This is the new authentication case introduced by our KEM-based user-authentication method. Here, a server accepts a client without a unique matching client session. The server's acceptance condition is a valid response
\[
  ca = \mathsf{PRF}_{\mathsf{CA}}(K_C,\mathsf{sid},\mathsf{ctx}),
\]
where $(K_C,ct_C) \leftarrow \mathsf{Encap}(pk_C)$ is generated by the server under the authorized long-term client KEM public key $pk_C$.

We guess the impersonated client identity and embed an $\mathsf{IND\text{-}CCA}$ challenge for $\mathsf{KEM}_{\mathsf{C}}$ into that client's long-term authentication key. The reduction uses its decapsulation oracle to simulate honest client decapsulations, except for the target challenge ciphertext used in the malicious-acceptance attempt. If the adversary can distinguish whether the target encapsulated key is real or random, this gives an $\mathsf{IND\text{-}CCA}$ adversary against $\mathsf{KEM}_{\mathsf{C}}$.

After the target encapsulated key is hidden, we replace the response derived from $\mathsf{PRF}_{\mathsf{CA}}$ with a uniformly random $\ell$-bit string, losing only the PRF advantage of $\mathsf{PRF}_{\mathsf{CA}}$. This replacement is applied at a fresh target input $y^\star=\mathsf{sid}^\star\parallel\mathsf{ctx}^\star$, whose collisions with non-target evaluations are already covered by the nonce and exchange-hash collision bounds. The expected response is therefore independent of the adversary's view before it sends $ca$. Therefore, the only remaining way to make the server accept is to guess this value, which succeeds with probability at most $2^{-\ell}$. Hence malicious server acceptance is bounded by the $\mathsf{IND\text{-}CCA}$ security of $\mathsf{KEM}_{\mathsf{C}}$, the PRF security of $\mathsf{PRF}_{\mathsf{CA}}$, and the direct-guessing term.

Combining the two cases with the initial collision and guessing losses gives the stated bound.
\end{proof}

We now consider authentication security against quantum adversaries. The malicious-acceptance event is unchanged: an initiator or responder session accepts believing that its peer is an uncorrupted party, but no unique matching session exists. Thus, for authentication security, the adversary class changes from PPT to QPT, while the interaction with the ACCE authentication experiment remains unchanged.

\begin{theorem}[Post-quantum ACCE authentication security]
  \label{thm:pq-auth}
  Consider SSH with signature-based server authentication and \kemuauth client authentication. Suppose that all primitive assumptions in Lemma~\ref{lem:mutual-auth} hold against quantum polynomial-time adversaries: collision resistance of $\mathsf{H}$, $\mathsf{EUF\text{-}CMA}$ security of $\mathsf{DSS}$, $\mathsf{IND\text{-}CPA}$ security of $\mathsf{KEM}_{\mathsf{e}}$, PRF security of $\mathsf{PRF}_{\mathsf{SSH}}$, authenticated-encryption security of $\mathsf{BSAE}$, $\mathsf{IND\text{-}CCA}$ security of $\mathsf{KEM}_{\mathsf{C}}$, and PRF security of $\mathsf{PRF}_{\mathsf{CA}}$. Then, for every QPT adversary $\mathcal{Q}$, there exist adversaries $\mathcal{B}_8,\dots,\mathcal{B}_{14}$ such that
  \begin{align*}
    \mathsf{Adv}_{\mathsf{SSH}}^{\mathsf{pq\text{-}acce\text{-}mu\text{-}auth}}(\mathcal{Q})
    &\le \frac{(n_P n_S)^2}{2^{\mu}} + \mathsf{Adv}_{\mathsf{H}}^{\mathsf{coll}}(\mathcal{B}_8) \\
    \quad +& n_P^2 n_S \Bigl( 
            \mathsf{Adv}_{\mathsf{DSS}}^{\mathsf{euf\text{-}cma}}(\mathcal{B}_9)
            + \mathsf{Adv}_{\mathsf{KEM}_{\mathsf{e}}}^{\mathsf{ind\text{-}cpa}}(\mathcal{B}_{10}) \\
    \quad +& \mathsf{Adv}_{\mathsf{PRF}_{\mathsf{SSH}}}^{\mathsf{prf}}(\mathcal{B}_{11})
            + \mathsf{Adv}_{\mathsf{BSAE}}^{\mathsf{bsae}}(\mathcal{B}_{12}) \\
    \quad +& \mathsf{Adv}_{\mathsf{KEM}_{\mathsf{C}}}^{\mathsf{ind\text{-}cca}}(\mathcal{B}_{13})
            + \mathsf{Adv}_{\mathsf{PRF_{CA}}}^{\mathsf{prf}}(\mathcal{B}_{14}) + 2^{-\ell} \Bigr).
    \end{align*}
\end{theorem}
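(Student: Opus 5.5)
The plan is to obtain Theorem~\ref{thm:pq-auth} by lifting the game sequence behind Lemma~\ref{lem:mutual-auth} to quantum adversaries, rather than building a new reduction. The malicious-acceptance event and the oracle interface of the ACCE authentication experiment are the same for PPT and QPT adversaries. I therefore re-run the same hops: nonce-collision abort, exchange-hash collision abort, guessing of the first maliciously accepting session and peer, and then the split into Case~A (client acceptance) and Case~B (server acceptance). For each hop I check that the constructed reduction $\mathcal{B}_j$ is QPT whenever $\mathcal{Q}$ is. The bound then follows term by term, with $\mathcal{B}_8,\dots,\mathcal{B}_{14}$ playing the roles of $\mathcal{B}_1,\dots,\mathcal{B}_7$.

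The key observation is that the adversary is quantum only in its local computation. All interaction with the ACCE challenger is classical: the Send, Reveal and Corrupt queries and the transcripts are classical strings. No primitive is queried in superposition, because $\mathsf{H}$, $\mathsf{PRF}_{\mathsf{SSH}}$ and $\mathsf{PRF}_{\mathsf{CA}}$ are treated as standard-model primitives, not random oracles.

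Each reduction in Lemma~\ref{lem:mutual-auth} is a straight-line black-box simulation. It runs $\mathcal{Q}$ once, answers its classical oracle queries using its own classical challenge oracles (signing, decapsulation, PRF evaluation, BSAE encryption/decryption), and outputs a classical value. Such reductions involve no rewinding and no measurement of $\mathcal{Q}$'s internal state, so they carry over to quantum adversaries unchanged. The guessing step is classical sampling independent of $\mathcal{Q}$'s state, so it costs the same factor $n_P^2 n_S$.

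The information-theoretic terms are unaffected because they concern classical values. These are $(n_P n_S)^2/2^{\mu}$, from honestly sampled nonces, and $2^{-\ell}$, which is the probability of a single classical response string matching a uniformly random value hidden from the adversary's view.

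In Case~B, the $\mathsf{IND\text{-}CCA}$ reduction embeds the challenge public key and uses the decapsulation oracle on classical ciphertexts. These ciphertexts are those delivered to the honest target client, excluding $ct^\star$. This matches exactly the post-quantum $\mathsf{IND\text{-}CCA}$ notion assumed for $\mathsf{KEM}_{\mathsf{C}}$, that is, QPT adversaries with classical decryption queries. The same applies to the PRF hop at the fresh input $\mathsf{sid}^\star\parallel\mathsf{ctx}^\star$.

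The main obstacle is justifying this ``classical queries suffice'' claim uniformly, especially for the collision-resistance hop. A QPT adversary may find collisions in $\mathsf{H}$ more efficiently, but that only affects the value of $\mathsf{Adv}^{\mathsf{coll}}_{\mathsf{H}}$ against QPT adversaries, which is assumed negligible. The reduction itself just outputs the two colliding classical inputs it observes. I would state explicitly the standard lifting argument: a black-box, non-rewinding reduction with classical oracle access remains valid when the adversary is quantum. I would then verify it hop by hop against the appendix game sequence to conclude the stated bound.
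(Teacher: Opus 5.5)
Your proposal follows the paper's proof essentially exactly. The paper also reruns the game sequence of Lemma~\ref{lem:mutual-auth} with every primitive advantage read against QPT adversaries, justifies the transfer by noting that all reductions are straight-line black-box simulations (no random-oracle programming, rewinding, forking, or measurement of the adversary's state), and then invokes Song's lifting argument~\cite{song2014quantum}; your remarks on the classical information-theoretic terms, the classical guessing factor $n_P^2 n_S$, and the sufficiency of classical-query IND-CCA and PRF security are consistent with, and slightly more explicit than, that sketch.
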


\begin{proof}
The proof follows the game-hopping argument of Lemma~\ref{lem:mutual-auth}, with all primitive advantages interpreted against QPT adversaries. The reductions are straight-line and black-box, since they only simulate the ACCE authentication experiment, embed the relevant primitive challenge, and forward the adversary's output, without using random-oracle programmability, rewinding, the forking lemma, or measurements of the adversary's internal state. By the lifting argument of Song~\cite{song2014quantum}, the same sequence of hops applies in the quantum setting. Malicious client acceptance reduces to breaking the host-key signature, $\mathsf{KEM_e}$, $\mathsf{PRF_{SSH}}$, or channel encryption scheme. Malicious server acceptance reduces to breaking the $\mathsf{KEM_C}$ or $\mathsf{PRF}_{\mathsf{CA}}$, except with probability $2^{-\ell}$ for directly guessing the response. These quantities are negligible by assumption, which proves the claim.
\end{proof}

\subsection{ACCE Channel Security}
\label{subsec:channel-secure}

We now derive channel security for SSH with KEM-based user authentication. The proof follows the post-quantum SSH channel-security analysis of~\cite{pqacce}.\footnote{We include this statement to complete the ACCE treatment of our SSH variant and show how the channel-security bound composes with Lemma~\ref{lem:mutual-auth}.} This reuse is justified because our modification is confined to user authentication: all KEM-authentication messages are sent after \texttt{SSH\_MSG\_NEWKEYS}, inside the encrypted channel, and are not inputs to transport key exchange, session-identifier derivation, or channel-key derivation. Thus, the classical ACCE bound only replaces server-only authentication with the mutual-authentication term from Lemma~\ref{lem:mutual-auth}.

\begin{lemma}[Channel security, KEM-based user auth. mode]
  \label{lem:channel}
  Let $\mu$ be the length of the random cookies/nonces exchanged during transport-layer negotiation, let $n_P$ be the number of parties, and let $n_S$ be the maximum number of sessions per party. For any PPT adversary $\mathcal{A}$, there exist adversaries $\mathcal{B}_{15},\mathcal{B}_{16},\mathcal{B}_{17}$ such that
    \begin{align*}
        \mathsf{Adv}_{\mathsf{SSH}}^{\mathsf{acce\text{-}mu\text{-}aenc}}(\mathcal{A})
        &\le \mathsf{Adv}_{\mathsf{SSH}}^{\mathsf{acce\text{-}mu\text{-}auth}}(\mathcal{A}) \\
        \quad &+ n_P^2 n_S^2 \bigl( 
            \mathsf{Adv}_{\mathsf{KEM_e}}^{\mathsf{ind\mbox{-}cpa}}(\mathcal{B}_{15}) \\
        \quad &+ \mathsf{Adv}_{\mathsf{PRF}_{\mathsf{SSH}}}^{\mathsf{prf}}(\mathcal{B}_{16})
            + \mathsf{Adv}_{\mathsf{BSAE}}^{\mathsf{bsae}}(\mathcal{B}_{17}) \bigr).
    \end{align*}
\end{lemma}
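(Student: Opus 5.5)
The plan is a short game-hopping argument in the style of the server-only channel-security analysis of~\cite{pqacce}. The only changes are the mutual-authentication term and a check that the \kemuauth messages leave the transport key schedule untouched.

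\emph{Game 0 to Game 1 (authentication).} Game~0 is the $\mathsf{acce\text{-}mu\text{-}aenc}$ experiment. Game~1 aborts if any session, initiator or responder, accepts maliciously. By the definition of the malicious-acceptance event, the two games differ by at most $\mathsf{Adv}_{\mathsf{SSH}}^{\mathsf{acce\text{-}mu\text{-}auth}}(\mathcal{A})$, which Lemma~\ref{lem:mutual-auth} already bounds. In Game~1, every fresh accepted test session $\pi_i^s$ has a unique matching partner session at an uncorrupted peer with the same $\mathsf{sid}$. Consequently the transport-layer $\mathsf{KEM_e}$ ciphertext and public key in its transcript were generated honestly. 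This remains true when the test session is a server session, because mutual authentication now covers responder sessions too; in the server-only analysis of~\cite{pqacce} this case had to be restricted.

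\emph{Game 2 (guessing).} We guess the test session and its partner session, namely the owner, the peer, and both session indices. This costs the factor $n_P^2 n_S^2$, and we abort on a wrong guess.

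\emph{Games 3--5 (key replacement and channel).} Game~3 replaces the $\mathsf{KEM_e}$ shared secret of the guessed pair with a uniform value, at a cost of $\mathsf{Adv}^{\mathsf{ind\text{-}cpa}}_{\mathsf{KEM_e}}$. The reduction embeds the challenge public key and ciphertext in the two partner sessions. It needs no decapsulation oracle, since the ephemeral key is used only in these two sessions. Game~4 replaces the $\mathsf{PRF}_{\mathsf{SSH}}$ outputs, that is, the channel keys derived from that secret together with $\mathsf{sid}$, with independent uniform keys, at a cost of $\mathsf{Adv}^{\mathsf{prf}}_{\mathsf{PRF_{SSH}}}$. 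Finally, with independent random keys, the encryption and decryption oracles of the test session and its partner can be answered using the $\mathsf{BSAE}$ challenger. A correct guess of $b$ therefore translates directly into $\mathsf{Adv}^{\mathsf{bsae}}_{\mathsf{BSAE}}$. Summing the hops gives the stated bound.

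\emph{Main obstacle.} The delicate step is arguing that the \kemuauth traffic does not break the Game~4/Game~5 simulation. Its messages travel inside the channel protected by the very keys that were replaced. They also involve $\mathsf{KEM_C}$ keys that the adversary may corrupt, which is allowed as long as the corruption happens after acceptance. I would handle this by observing two facts. First, $\mathsf{ctx}$, $ct_C$ and $ca$ are never inputs to the exchange hash or to $\mathsf{PRF}_{\mathsf{SSH}}$. Second, the reduction can simulate the user-authentication layer by itself: it generates all $\mathsf{KEM_C}$ key pairs, so it can run $\mathsf{Encap}$ and $\mathsf{Decap}$ and compute $\mathsf{PRF}_{\mathsf{CA}}$ directly. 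It then sends the resulting plaintexts through the $\mathsf{BSAE}$ oracles exactly as it would send application data. The only events this simulation cannot reproduce are forged acceptances, and Game~1 has already excluded them. This is why the new primitives appear only through the $\mathsf{acce\text{-}mu\text{-}auth}$ term.
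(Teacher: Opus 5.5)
Your proposal is correct and follows the paper's proof essentially step for step: abort on malicious acceptance at the cost of $\mathsf{Adv}_{\mathsf{SSH}}^{\mathsf{acce\text{-}mu\text{-}auth}}(\mathcal{A})$, guess the test session and its partner for the factor $n_P^2 n_S^2$, then replace the $\mathsf{KEM_e}$ secret, replace the $\mathsf{PRF}_{\mathsf{SSH}}$ output, and embed the channel into a $\mathsf{BSAE}$ challenger. Your explicit argument that the reduction can simulate the \kemuauth traffic itself (it holds all $\mathsf{KEM_C}$ keys and routes those plaintexts through the $\mathsf{BSAE}$ oracles) spells out what the paper only asserts, namely that these messages never feed the exchange hash or the channel-key derivation.
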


\begin{proof}
The proof aborts if the test session accepts maliciously, an event bounded by $\mathsf{Adv}_{\mathsf{SSH}}^{\mathsf{acce\text{-}mu\text{-}auth}}$. Conditioned on no such event, the test session has a unique matching partner. The proof then guesses the test session and partner, replaces the transport KEM secret using $\mathsf{IND\text{-}CPA}$ security of $\mathsf{KEM}_{\mathsf{e}}$, replaces the key-schedule output of $\mathsf{PRF}_{\mathsf{SSH}}$, and embeds the channel into a $\mathsf{BSAE}$ challenger, yielding the stated bound.
\end{proof}

For post-quantum channel security, the KEM-authentication layer does not affect the proof. In the ``\textit{harvest now, decrypt later}'' experiment, sessions obtained through $\mathsf{OExecute}(i,s,j,t)$ are honestly generated and already have matching partners. The argument therefore depends only on the transport-layer KEM, the SSH key schedule, and the channel encryption scheme.

\begin{theorem}[Post-quantum channel security]
  \label{thm:pq-channel}
  Let $n_P$ be the number of parties and $n_S$ the maximum number of sessions per party. For any QPT adversary $\mathcal{Q}$, there exist QPT adversaries $\mathcal{B}_{18},\mathcal{B}_{19},\mathcal{B}_{20}$ such that
    \begin{align*}
        \mathsf{Adv}_{\mathsf{SSH}}^{\mathsf{pq}\text{-}\mathsf{acce}\text{-}\mathsf{aenc}}(\mathcal{Q})
        \le & n_P n_S \bigl( \mathsf{Adv}_{\mathsf{KEM_e}}^{\mathsf{ind\mbox{-}cpa}}(\mathcal{B}_{18}) \\
        +& \mathsf{Adv}_{\mathsf{PRF}_{\mathsf{SSH}}}^{\mathsf{prf}}(\mathcal{B}_{19}) + \mathsf{Adv}_{\mathsf{BSAE}}^{\mathsf{bsae}}(\mathcal{B}_{20}) \bigr).
    \end{align*}
\end{theorem}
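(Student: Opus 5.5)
The plan is a short game-hopping argument in the passive ``harvest now, decrypt later'' experiment. The point to exploit is that no session is driven through a send oracle. Every accepted session comes from $\mathsf{OExecute}(i,s,j,t)$, so it already has an honest matching partner by construction. This means the malicious-acceptance term $\mathsf{Adv}^{\mathsf{acce\text{-}mu\text{-}auth}}_{\mathsf{SSH}}$, the nonce-collision term and the hash-collision term of Lemma~\ref{lem:channel} can be dropped. It also means we guess only the test session, not the test session together with its partner. This accounts for the factor $n_P n_S$ in place of $n_P^2 n_S^2$.

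Game~0 is the original pq-aenc experiment. In Game~1 the challenger guesses uniformly the pair $(i,s)$ that $\mathcal{Q}$ will pick as its fresh test session and aborts otherwise. This loses a factor $n_P n_S$. The partner $\pi_j^t$ is then determined by the $\mathsf{OExecute}$ call that created $\pi_i^s$.

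In Game~2 we replace the transport-layer shared secret of that execution with a uniform value. The reduction $\mathcal{B}_{18}$ against $\mathsf{IND\text{-}CPA}$ of $\mathsf{KEM}_{\mathsf{e}}$ plants the challenge public key and ciphertext in the guessed execution. It generates all other executions, all long-term keys (host signing keys and client $\mathsf{KEM}_{\mathsf{C}}$ keys), and all \kemuauth messages itself. This lets it answer corrupt and reveal queries on non-test sessions. Freshness forbids revealing the test keys. Corruption of long-term keys, even before acceptance, is harmless here: the ephemeral secret does not depend on them, and the handshake is honest. Since the \kemuauth messages are sent after \texttt{SSH\_MSG\_NEWKEYS} and are not inputs to $\mathsf{sid}$ or to key derivation, simulating them costs nothing.

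In Game~3 the $\mathsf{PRF}_{\mathsf{SSH}}$ key-schedule outputs for the test session and its partner are replaced with random channel keys, with reduction $\mathcal{B}_{19}$. This is done consistently across both partners, so the channel keys stay shared and correctness is preserved. In the final step, $\mathcal{B}_{20}$ embeds the test channel into the $\mathsf{BSAE}$ challenger. It forwards $\mathcal{Q}$'s encryption and decryption queries on $\pi_i^s$ and its partner, which includes the encrypted \kemuauth traffic, and outputs $\mathcal{Q}$'s bit. This leaves the bound $\mathsf{Adv}^{\mathsf{bsae}}$.

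All three reductions are straight-line and black-box, with no rewinding and no random-oracle programming. By the same lifting argument of Song~\cite{song2014quantum} used for Theorem~\ref{thm:pq-auth}, they remain valid when $\mathcal{Q}$ and the $\mathcal{B}_k$ are QPT.

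The step that needs the most care is Game~2. We must check that the embedded execution can still be completed honestly without knowing the KEM secret. In particular, the exchange hash and $\mathsf{sid}$ take the shared secret as input. The reduction therefore has to compute $\mathsf{sid}$ from the challenge key. The \kemuauth response $\mathsf{PRF}_{\mathsf{CA}}(K_C,\mathsf{sid},\mathsf{ctx})$ then uses this $\mathsf{sid}$ in either world, which keeps the simulation consistent. We must also confirm that $\mathsf{sid}$ leaks nothing beyond what the $\mathsf{PRF}_{\mathsf{SSH}}$ hop already absorbs, following the treatment in~\cite{pqacce}. If that treatment instead models the exchange hash as a PRF output of the KEM secret, we fold this into the $\mathcal{B}_{19}$ hop.
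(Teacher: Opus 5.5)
Your proposal is correct and follows essentially the same route as the paper. Because $\mathsf{OExecute}$ sessions are honest, you drop the malicious-acceptance abort and guess only the test session, which gives the factor $n_P n_S$; you then hop on $\mathsf{IND\text{-}CPA}$ of $\mathsf{KEM}_{\mathsf{e}}$, then on $\mathsf{PRF}_{\mathsf{SSH}}$, and embed the channel into a $\mathsf{BSAE}$ challenger, with the \kemuauth messages unaffected since they follow \texttt{SSH\_MSG\_NEWKEYS}. Your fallback for $\mathsf{sid}$ is the one the paper's model actually uses: there $(\pi.\mathsf{sid},\pi.k)\leftarrow\mathsf{PRF}_{\mathsf{SSH}}(K,x)$, so the exchange hash is replaced together with the channel keys in the $\mathcal{B}_{19}$ hop.
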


\begin{proof}
Since execution-oracle sessions are honest, no malicious-acceptance abort is needed. The proof guesses the test session, replaces the $K_e$ and SSH key-schedule output using the quantum security of $\mathsf{KEM}_{\mathsf{e}}$ and $\mathsf{PRF}_{\mathsf{SSH}}$, and embeds the resulting channel into a quantum-secure $\mathsf{BSAE}$ challenger. KEM-authentication messages occur only after channel keys are established and do not affect these game hops.
\end{proof}

\subsection{Security Role of $\mathsf{PRF}_{\mathsf{CA}}$}
\label{subsec:prfca}

The role of $\mathsf{PRF}_{\mathsf{CA}}$ is to turn the KEM shared secret $K_C$ into an explicit authentication value that is bound to the SSH session and the concrete user-authentication attempt. In the malicious server-acceptance case of Lemma~\ref{lem:mutual-auth}, the IND-CCA security of the client KEM first hides the challenge secret $K_C$ from the adversary. The PRF security of $\mathsf{PRF}_{\mathsf{CA}}$ then allows the expected response on $(\mathsf{sid},\mathsf{ctx})$ to be replaced by a uniformly random $\ell$-bit string, leaving only the direct guessing probability $2^{-\ell}$. For post-quantum authentication security, the same requirement is interpreted against QPT distinguishers. Our HMAC-based instantiation follows Section~\ref{sec:kem-auth} and relies on standard PRF assumptions for HMAC~\cite{rfc2104,fips198,bellare2006hmac}.

\section{Implementation and Evaluation}
\label{sec:evaluation}

This section evaluates \kemuauth as a practical drop-in user-authentication method for SSH. We study whether replacing client signatures with a KEM-based proof changes the cost profile of post-quantum SSH authentication in realistic deployment settings. Our evaluation covers an OpenSSH-based implementation, primitive-level authentication costs, end-to-end handshake latency under representative RTTs, TCP initial-window effects, and server-side online authentication cost. The results show that \kemuauth is competitive with signature-based public-key authentication in common network settings, while providing clearer benefits when large post-quantum signatures stress transmission or server-side verification.
The raw experimental data are reported in Appendix~\ref{app:raw-eval}.

\subsection{Instantiation and Implementation}

We implemented \kemuauth in an OQS-v10 fork based on OpenSSH 10.2p1 as a local extension to the SSH user-authentication layer. The prototype adds a new authentication method, \texttt{publickey-kem}, for KEM public-key credentials, allowing it to coexist with existing SSH authentication methods. It uses liboqs~\cite{liboqs} for post-quantum KEM operations and currently supports ML-KEM-512, ML-KEM-768, and ML-KEM-1024 as client credential algorithms. Following the transcript binding in Section~\ref{sec:kem-auth}, the client-authentication response is instantiated as an HMAC-SHA-2 value keyed by the decapsulated KEM secret and computed over the fixed label \texttt{ssh-publickey-kem-client-auth}, the SSH session identifier, and the canonical SSH encoding of the \kemuauth authentication context. KEM choices and response-function variants are exposed through \kemuauth algorithm identifiers $alg$ for future extensibility. The code, scripts, and patches used in our evaluation are publicly available.\footnote{Our implementation is available at \url{https://github.com/bhivfsadh/KEM-Based-User-Authentication-for-Post-Quantum-SSH}}

\subsection{Experimental Setup}
\label{subsec:exp-setup}

\textbf{Testbed.} All experiments were conducted on a workstation-class test host equipped with an Intel Core Ultra 9 275HX at 2.7GHz, configured to use 16 hardware threads with one thread per core, and 32GB RAM, running Ubuntu 22.04.5 LTS. To reduce measurement noise, we disabled CPU frequency scaling and fixed the CPU governor to \texttt{performance}. The code was compiled with GCC~11.4 using \texttt{-O2 -mavx2}. Post-quantum primitives were provided by liboqs~0.15.0, including AVX2-optimized implementations of ML-KEM, ML-DSA, Falcon, and SLH-DSA. Classical cryptographic algorithms, including Ed25519, were provided by OpenSSL~3.0.2.

\textbf{Network configuration.} Unless otherwise specified, the SSH transport-layer key exchange was fixed to \texttt{mlkem768x25519-sha256}, and the TCP maximum segment size (MSS) was set to 1460 bytes. We used RTTs of about 37\,ms, 67\,ms, and 163\,ms as representative low-, intermediate-, and high-latency settings for the main evaluation~\cite{assesstlsssh}. For sensitivity analysis, we swept RTT from 0 to 200\,ms and evaluated random packet-loss rates from 0\% to 2\%. To study TCP transmission effects, we varied the TCP initial congestion window from 3 to 50 MSS, spanning constrained small-window settings, the common IW10 setting, and larger windows observed in deployed systems~\cite{rfc9006,rfc6928,rueth2018demystifying}. All other experiments used 10 MSS. Network delay and TCP parameters were configured using Linux \texttt{tc/netem} and \texttt{ip route}.

\textbf{Measurement methodology.} We use two modes. For cryptographic microbenchmarks, each operation was run for 1,000 warm-up and 10,000 measured iterations, with five repetitions per experiment. We report median operation time measured using \texttt{clock\_gettime(CLOCK\_MONOTONIC)}. For end-to-end tests, we measure from the TCP SYN to successful user authentication over 5,000 handshakes per configuration. After each handshake, we close the connection and clear client-side cached state, including \texttt{known\_hosts}, to avoid repeated-connection artifacts. We report latency in milliseconds, emphasizing medians and tail latency in the main text.
Raw distributions are reported in Appendix~\ref{app:raw-eval}.

\subsection{Microbenchmark Characterization}

\begin{table*}[htbp]
\centering
\caption{Sizes and operation costs of authentication primitives used in our evaluation}
\label{tab:alg-bench}
\scriptsize
\setlength{\tabcolsep}{4pt}
\renewcommand{\arraystretch}{1.3}
\newcommand{\thickhline}{\noalign{\hrule height 0.8pt}}
\newcommand{\spacedhline}{\noalign{\vskip 1.5pt}\hline\noalign{\vskip 1.5pt}}

\begin{tabular}{l c| c r| r r| r r| c}
\thickhline
\noalign{\vskip 2pt}
Authentication & Notation & Problem & NIST PQ & Public Key & Signature/Ciphertext & Sign/Encaps. & Verify/Decaps. & Optimization \\
Algorithm & & Family & Category & (Bytes) & (Bytes) & (ms) & (ms) & \\
\spacedhline

Ed25519~\cite{rfc8709} & ed25519 & ECC Discrete Logarithm & $\approx$0 bits & 32 & 64 & 0.019 & 0.047 & OpenSSL \\
\spacedhline

ML-DSA-44~\cite{fips204} & md44 & Module-LWE/MSIS & Level 2 & 1312 & 2420 & 0.033 & 0.014 & AVX2 \\
ML-DSA-65~\cite{fips204} & md65 & Module-LWE/MSIS & Level 3 & 1952 & 3309 & 0.054 & 0.022 & AVX2 \\
ML-DSA-87~\cite{fips204} & md87 & Module-LWE/MSIS & Level 5 & 2592 & 4627 & 0.066 & 0.035 & AVX2 \\
\spacedhline

Falcon-512~\cite{oqsfalcon} & falcon512 & NTRU-SIS & Level 1 & 897 & 752 & 0.131 & 0.021 & AVX2 \\
Falcon-1024~\cite{oqsfalcon} & falcon1024 & NTRU-SIS & Level 5 & 1793 & 1462 & 0.255 & 0.042 & AVX2 \\
\spacedhline

SLH-DSA-SHA2-128f~\cite{fips205} & sd128f & Hash-based & Level 1 & 32 & 17088 & 12.704 & 0.734 & AVX2 \\
SLH-DSA-SHA2-192f~\cite{fips205} & sd192f & Hash-based & Level 3 & 48 & 35664 & 21.380 & 1.122 & AVX2 \\
SLH-DSA-SHA2-256f~\cite{fips205} & sd256f & Hash-based & Level 5 & 64 & 49856 & 43.370 & 1.124 & AVX2 \\
\spacedhline

ML-KEM-512~\cite{fips203} & mk512 & Module-LWE & Level 1 & 800 & 768 & 0.005 & 0.006 & AVX2 \\
ML-KEM-768~\cite{fips203} & mk768 & Module-LWE & Level 3 & 1184 & 1088 & 0.008 & 0.009 & AVX2 \\
ML-KEM-1024~\cite{fips203} & mk1024 & Module-LWE & Level 5 & 1568 & 1568 & 0.010 & 0.012 & AVX2 \\
\noalign{\vskip 2pt}
\thickhline

\end{tabular}
\end{table*}

Table~\ref{tab:alg-bench} summarizes the primitive measurements used to interpret the protocol-level results. We select standardized or SSH-relevant baselines, with Ed25519 as the classical reference, and group post-quantum parameter sets by NIST category only to coarsely align target security strength, not to imply equivalent security. We report public-key and signature or ciphertext sizes together with signing and verification or encapsulation and decapsulation times, omitting offline key generation for long-term SSH credentials. For Falcon, whose signatures are variable-length, we report the maximum signature size. Object sizes help explain packetization effects, while local operation costs inform the server-side analysis. However, because the post-quantum schemes use AVX2-optimized liboqs implementations whereas Ed25519 uses OpenSSL, these measurements should not be interpreted as a general ranking of primitive families.

\subsection{End-to-End Latency Evaluation}

We next evaluate how \kemuauth affects end-to-end SSH handshake latency, with the transport key exchange fixed to \texttt{mlkem768x25519-sha256}. We first isolate the client-authentication algorithm, then study practical migration configurations across RTT regimes.

\begin{figure}[htbp]
\centering
\includegraphics[width=\columnwidth]{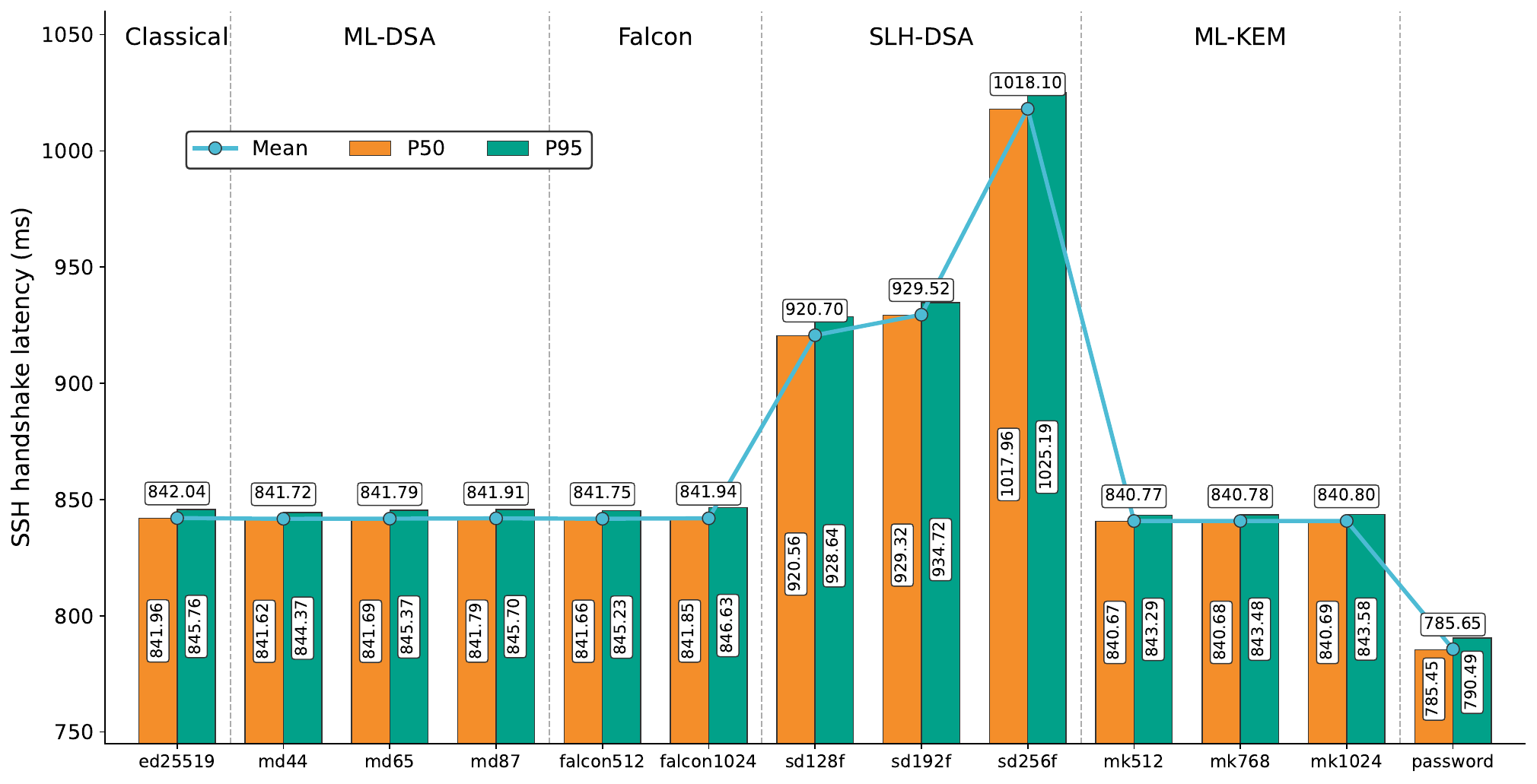}
\caption{End-to-end SSH handshake latency when varying only the client-authentication algorithm at RTT~$\approx$~67\,ms. Bars show the median and 95th percentile, and the line shows the mean.}
\label{fig:client-auth-latency}
\end{figure}

Figure~\ref{fig:client-auth-latency} isolates the effect of client authentication, with Ed25519 server authentication, \texttt{mlkem768x25519-sha256} transport key exchange, and RTT fixed at approximately 67\,ms. ML-KEM remains in the same latency band as compact signatures: ML-KEM-768 completes at 840.68\,ms median latency, while Ed25519, ML-DSA, and Falcon remain within about 2\,ms, indicating that fixed protocol and network costs dominate primitive-level differences. Password authentication is faster at 785.45\,ms because it avoids the public-key probe or KEM challenge, but follows a distinct credential model rather than public-key proof of possession. In contrast, SLH-DSA-SHA2-192f and SLH-DSA-SHA2-256f reach 929.32\,ms and 1017.96\,ms, respectively, with the latter reflecting higher signing cost and an extra TCP flight under the default 10 MSS setting.

\begin{figure*}[t]
\centering
\includegraphics[width=\textwidth]{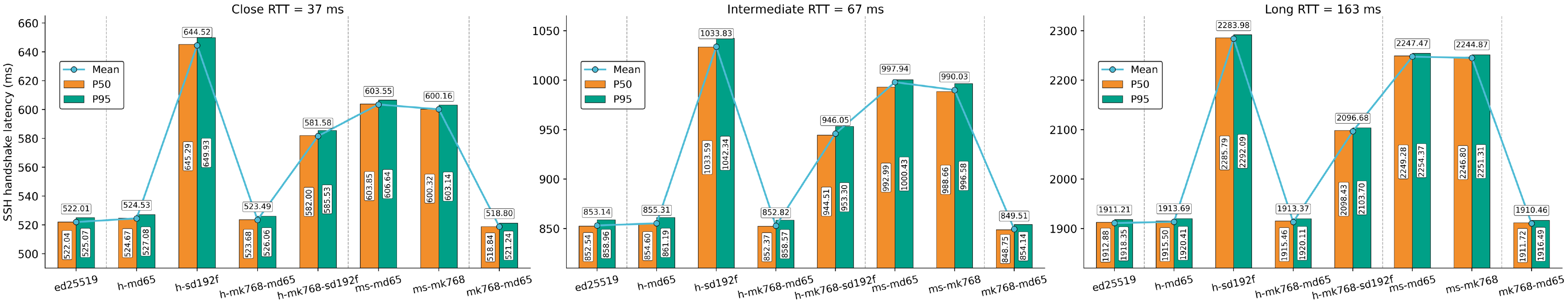}
\caption{End-to-end SSH handshake latency under different RTT regimes. Bars show the median and 95th percentile, and the line shows the mean. Labels without a separator indicate that the client and server use the same authentication primitive. The prefix \texttt{h-} denotes a hybrid configuration that combines Ed25519 with the indicated post-quantum authenticator. The prefix \texttt{ms-} denotes SSH-native multi-step client authentication in which Ed25519 is followed by the indicated post-quantum method, with server authentication fixed to \texttt{md65 + ed25519}. For labels of the form \texttt{X-Y}, \texttt{X} denotes the client-authentication method and \texttt{Y} denotes the server-authentication method.}
\label{fig:rtt-latency}
\end{figure*}

Figure~\ref{fig:rtt-latency} evaluates practical migration configurations across close, intermediate, and long RTT regimes, with the TCP initial window fixed at 10 MSS. These include hybrid configurations combining a classical and a post-quantum authenticator, pure post-quantum modes, and SSH-native multi-step configurations running two user-authentication methods sequentially. As expected, absolute handshake latency is largely driven by RTT, so the lower primitive cost of ML-KEM relative to ML-DSA is mostly absorbed by fixed network and protocol costs. In this setting, \kemuauth closely matches ML-DSA-based migration paths: \texttt{h-mk768-md65} is up to about 3\,ms faster than the ML-DSA hybrid baseline \texttt{h-md65} across all RTTs. The benefit is larger when client authentication uses large SLH-DSA signatures. Both \texttt{h-sd192f} and \texttt{h-mk768-sd192f} use the same SLH-DSA server authenticator, while the latter replaces client-side SLH-DSA with ML-KEM. This reduces median handshake latency by 9.8\%, 8.6\%, and 8.2\% at close, intermediate, and long RTTs, respectively. Multi-step configurations are slower than parallel hybrids because they add a serial authentication stage. The RTT-sweep and packet-loss experiments confirm that these trends remain stable under broader network conditions.
Detailed results are reported in Tables~\ref{tab:rtt-sensitivity} and~\ref{tab:loss-sensitivity}.

\subsection{Transport Effects of Authentication Object Size}

We next isolate the transmission effect of authentication-object size. The RTT is fixed at approximately 67\,ms, and the TCP initial congestion window is varied from 3 to 50 MSS. This setting stresses whether public keys, signatures, and KEM ciphertexts fit into the early TCP flight, which is especially relevant for constrained or conservative TCP configurations.

\begin{figure}[htbp]
\centering
\includegraphics[width=\columnwidth]{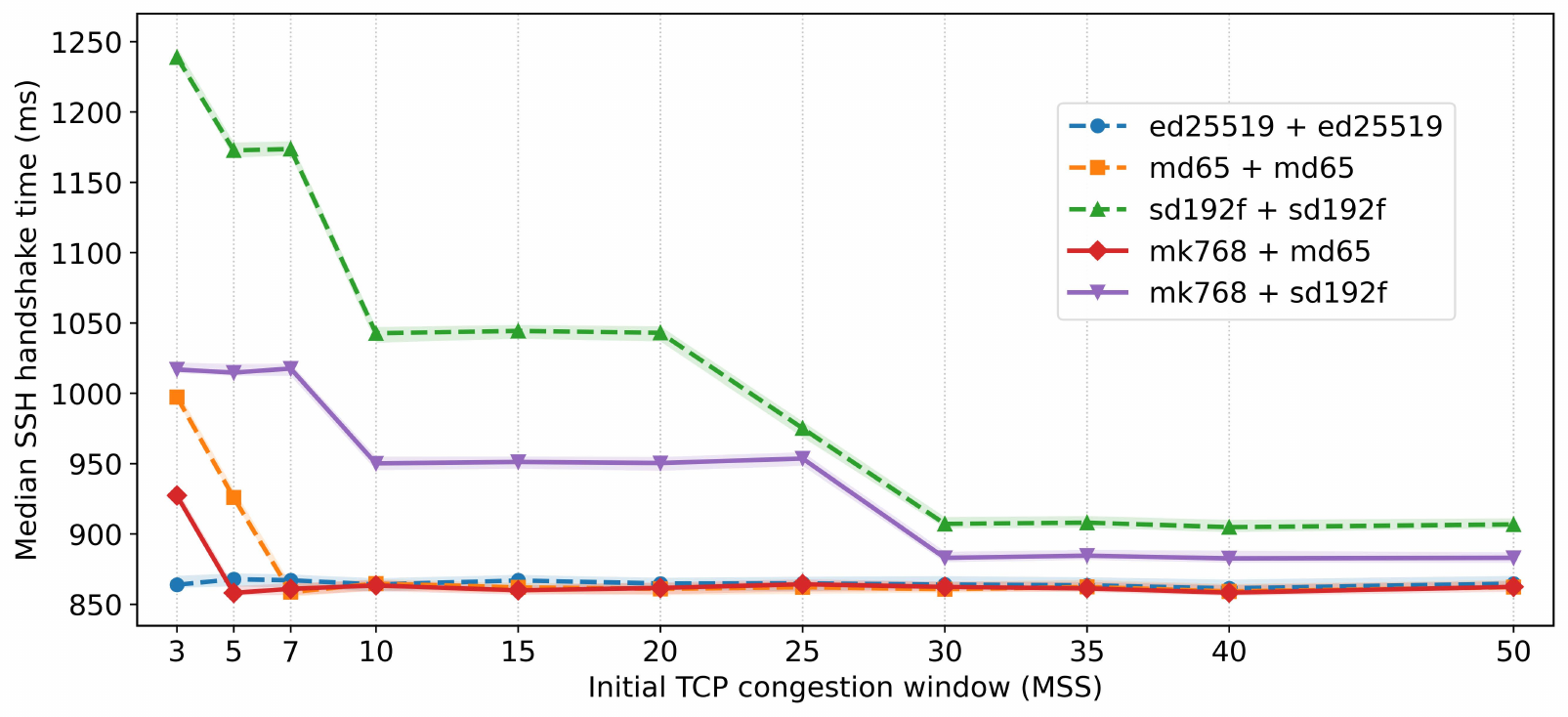}
\caption{SSH handshake latency under varying TCP initial congestion windows at RTT~$\approx$~67\,ms, with the transport key exchange fixed to \texttt{mlkem768x25519-sha256}. Lines show median latency and shaded bands show the 5th--95th percentile range. \texttt{X + Y} denotes client authentication with \texttt{X} and server authentication with \texttt{Y}.}
\label{fig:initcwnd}
\end{figure}

Figure~\ref{fig:initcwnd} shows that compact authentication configurations are relatively stable across window sizes, while configurations with large post-quantum signatures become sensitive to small initial windows. The Ed25519 baseline and \texttt{mk768 + md65} remain in the same latency band, whereas \texttt{md65 + md65} and especially \texttt{sd192f + sd192f} incur clear penalties at 3 and 5 MSS. Replacing the client signature with ML-KEM reduces this sensitivity. For ML-DSA, \texttt{mk768 + md65} lowers median latency by 7.0\% at 3 MSS and 7.3\% at 5 MSS compared with \texttt{md65 + md65}. At 7 MSS and above, the difference remains within 0.3\% with no consistent direction. For SLH-DSA, \texttt{mk768 + sd192f} reduces median latency by 17.9\% at 3 MSS and 13.5\% at 5 MSS compared with \texttt{sd192f + sd192f}, and the reduction remains 8.9\% at the default 10 MSS. Overall, \kemuauth mainly benefits ML-DSA configurations under small initial windows, while the larger SLH-DSA signatures yield latency savings over a wider range of window sizes.

\subsection{Server-Side Online Cryptographic Cost}

We finally evaluate the server-side online cryptographic cost of client authentication and quantify how the resulting savings affect connection throughput under concurrent load. The measurements connect per-credential cryptographic processing with end-to-end server behavior as concurrency increases.

\begin{table}[htbp]
\centering
\caption{Per-Credential Online Cryptographic Cost}
\label{tab:server-cost}
\scriptsize
\setlength{\tabcolsep}{5pt}
\renewcommand{\arraystretch}{1.3}
\newcommand{\thickhline}{\noalign{\hrule height 0.8pt}}
\newcommand{\spacedhline}{\noalign{\vskip 1.5pt}\hline\noalign{\vskip 1.5pt}}

\begin{tabular}{l c| c| r r}
\thickhline
\noalign{\vskip 2pt}
Client Auth. & NIST PQ & Server-side & Median Cost & Relative to \\
Notation & Category & Operation & (ms) & \texttt{mk768} \\
\spacedhline

ed25519 & $\approx$0 bits & Verify & 0.047 & 5.2$\times$ \\
\spacedhline

md44 & Level 2 & Verify & 0.014 & 1.6$\times$ \\
md65 & Level 3 & Verify & 0.022 & 2.4$\times$ \\
md87 & Level 5 & Verify & 0.035 & 3.9$\times$ \\
\spacedhline

sd128f & Level 1 & Verify & 0.734 & 81.6$\times$ \\
sd192f & Level 3 & Verify & 1.122 & 124.7$\times$ \\
sd256f & Level 5 & Verify & 1.124 & 124.9$\times$ \\
\spacedhline

mk512 & Level 1 & Encaps. + HMAC & 0.007 & 0.8$\times$ \\
mk768 & Level 3 & Encaps. + HMAC & 0.009 & 1.0$\times$ \\
mk1024 & Level 5 & Encaps. + HMAC & 0.012 & 1.3$\times$ \\
\noalign{\vskip 2pt}
\thickhline

\end{tabular}
\end{table}

Table~\ref{tab:server-cost} shows that \kemuauth has lower per-credential server-side cryptographic cost than all evaluated signature-verification baselines. Within NIST category 3, ML-KEM-768 plus HMAC costs 0.009\,ms, 59.1\% below ML-DSA-65 verification at 0.022\,ms. The larger gaps against the SLH-DSA variants mainly reflect the high verification cost of hash-based signatures rather than a \kemuauth-specific advantage.

With transport key exchange and server authentication fixed, we compared \kemuauth with ML-DSA-65 in a concurrent-load experiment using \(N \in \{1,8,16,32,64\}\) client workers. \kemuauth improves the connection-completion rate by 4.5\%, 3.3\%, 1.4\%, 1.1\%, and 0.35\%, respectively. These gains reflect its lower authentication-side cryptographic cost and narrow near CPU saturation, where common SSH costs such as key exchange, packet processing, process management, and scheduling increasingly govern the overall connection rate.
Appendix~\ref{app:concurrent-eval} reports the detailed experimental setup and complete throughput, latency, and memory results.

\section{Discussion and Future Work}

\textbf{Client-side KEM Auth. vs. Mutual KEM Auth.}
\kemuauth applies KEM authentication at the user-authentication layer because SSH server authentication is embedded in the transport layer and coupled with key exchange, session binding, and host-key trust. Extending KEM authentication to the server would require broader changes and could add messages, increasing deployment complexity and eroding its benefits, especially on high-latency links. \kemuauth therefore uses SSH's method-extensible client-authentication layer as a clean integration point without redesigning transport semantics or adding protocol overhead.

\textbf{Deniable Authentication and Auditability.}
\kemuauth provides deniable online authentication through a session-bound response demonstrating possession of the client KEM secret key without creating transferable evidence. Since the server creates the challenge ciphertext and obtains the shared secret, it can reproduce the response and simulate a valid transcript. Third parties cannot attribute the transcript uniquely to the client, consistent with prior KEM-authenticated and deniable-authentication designs~\cite{kemtls,diraimondo2006deniable}. \kemuauth suits online access control, while deployments requiring non-repudiation or independently verifiable audit evidence should retain signature authentication alone or combine it with \kemuauth through SSH's multi-method policies~\cite{nist_digitalsignature,krawczyk2003sigma}.

\textbf{Implementation and Side-Channel Security.} 
Our ACCE analysis establishes protocol-level security under the IND-CCA assumption for the client KEM, while leakage from concrete implementations lies outside this model. For example, ciphertexts chosen by a malicious server may expose the client's long-term decapsulation interface as a potential decapsulation-oracle surface. Our public artifact includes preliminary malformed-ciphertext and timing checks that found no evident anomalies, but these do not constitute a comprehensive implementation-security evaluation. Deployment therefore additionally relies on constant-time decapsulation and broader side-channel and fault-injection assessment.

\textbf{Toward Standardization and Interoperability.}
Turning \kemuauth into an interoperable SSH extension requires a precise wire specification for method names, message formats, KEM key encodings, algorithm identifiers, transcript binding, and downgrade behavior, together with deployment integration for \texttt{authorized\_keys}, SSH agents, and multi-method policies. Since \kemuauth runs within the established SSH channel, downgrade handling primarily concerns policy fallback to other enabled authentication methods rather than tampering within the method. The prototype and controlled measurements establish feasibility, while broader adoption requires interoperability testing, machine-checked validation of the protocol state machine against implementations, and standardization within the post-quantum SSH ecosystem.

\section{Conclusion}

We presented \kemuauth, a drop-in KEM-based user-authentication method for post-quantum SSH that replaces client signatures with a session-bound KEM proof while preserving SSH's public-key credential model and authentication flow. We proved its security in the post-quantum ACCE framework and implemented it in OpenSSH with liboqs. Our evaluation shows competitive end-to-end performance across practical migration settings, with particular benefits when post-quantum signatures increase TCP transmission or server-side verification costs. Overall, \kemuauth identifies a practical, lightweight middle ground for post-quantum SSH authentication, extending KEMs beyond transport-layer key exchange without redesigning SSH's transport semantics.

\bibliographystyle{IEEEtran}
\bibliography{refs}

\appendices

\section{CRYPTOGRAPHIC DEFINITIONS}\label{app:defs}

The security proof of our proposed scheme relies on the following cryptographic primitives and their standard security notions.

\begin{definition}[Hash Function and Collision Resistance]
  A hash function $\mathsf{H} : \{0,1\}^* \to \{0,1\}^\lambda$ maps an arbitrary-length bit string to a fixed-length output of $\lambda$ bits. Collision resistance requires that for any probabilistic polynomial-time (PPT) adversary $\mathcal{A}$ the probability that $\mathcal{A}$ outputs two distinct messages $m \neq m'$ such that $\mathsf{H}(m) = \mathsf{H}(m')$ is negligible. This probability is denoted by
  $$
  \mathsf{Adv}_{\mathsf{H}}^{\mathsf{coll}}(\mathcal{A}) = \Pr\left[ (m,m') \leftarrow \mathcal{A} : m \neq m' \land \mathsf{H}(m) = \mathsf{H}(m') \right].
  $$
\end{definition}

\begin{definition}[Pseudorandom Function (PRF)]
  A pseudorandom function family $\mathsf{PRF} : \mathcal{K} \times \mathcal{X} \to \{0,1\}^\lambda$ maps a key $k \in \mathcal{K}$ and an input $x \in \mathcal{X}$ to a fixed-length output. PRF security is defined via the following experiment: the challenger chooses a random key $k \leftarrow \mathcal{K}$ and a random bit $b \in \{0,1\}$. If $b = 0$, the adversary $\mathcal{A}$ is given oracle access to $\mathsf{PRF}(k,\cdot)$; if $b = 1$, $\mathcal{A}$ is given access to a truly random function $R : \mathcal{X} \to \{0,1\}^\lambda$. $\mathcal{A}$ may adaptively query the oracle and finally outputs a guess $b'$. The adversary wins if $b' = b$, and its advantage is defined as
  $$
  \mathsf{Adv}_{\mathsf{PRF}}^{\mathsf{prf}}(\mathcal{A}) = \left|\Pr[b' = b] - \frac12\right|.
  $$
\end{definition}

\begin{definition}[Stateful Authenticated Encryption (BSAE)]
  A $\mathsf{BSAE}$ scheme consists of the following algorithms:
  \begin{itemize}
    \item $\mathsf{StInit}()$ outputs an initial state $st$;
    \item $\mathsf{Enc}(k, m, st)$ takes a key $k$, a plaintext $m$, and the current state $st$, and outputs a ciphertext $c$ together with an updated state $st'$;
    \item $\mathsf{Dec}(k, c, st)$ takes a key $k$, a ciphertext $c$, and the current state $st$, and outputs either a plaintext $m$ or the rejection symbol $\perp$, along with an updated state $st'$.
  \end{itemize}
  Security comprises two aspects:
  \begin{itemize}
    \item Ciphertext indistinguishability: an adversary cannot distinguish ciphertexts corresponding to two equal-length plaintexts, even when allowed to adaptively choose plaintexts and observe ciphertexts (while taking state updates into account). This is typically denoted as IND-BSAE.
    \item Ciphertext integrity: an adversary cannot forge a new valid ciphertext, i.e., cannot cause the decryption algorithm to output a non-$\perp$ value that was not obtained from the encryption oracle. This is typically denoted as INT-BSAE.
  \end{itemize}
  The combined security (satisfying both properties) is denoted by $\mathsf{Adv}_{\mathsf{BSAE}}^{\mathsf{bsae}}(\mathcal{A})$. For full details of the security experiment we refer to Bergsma et al.~\cite{bellare2002authenticated}.
\end{definition}

\begin{definition}[Digital Signature (EUF-CMA)]
  A signature scheme $\mathsf{DSS} = (\mathsf{Gen}, \mathsf{Sign}, \mathsf{Vrfy})$ consists of key generation, signing, and verification algorithms. Existential unforgeability under chosen message attacks (EUF-CMA) is defined via the following experiment: the challenger generates a key pair $(pk, sk) \leftarrow \mathsf{Gen}(1^\lambda)$ and gives $pk$ to the adversary $\mathcal{A}$. The adversary may adaptively query a signing oracle $\mathsf{Sign}(sk, \cdot)$ and obtains signatures on chosen messages. Eventually, $\mathcal{A}$ outputs a message $m^*$ and a signature $\sigma^*$. The adversary wins if $\mathsf{Vrfy}(pk, m^*, \sigma^*) = 1$ and $m^*$ was never submitted to the signing oracle. The advantage is defined as
  $$
  \mathsf{Adv}_{\mathsf{DSS}}^{\mathsf{euf\mbox{-}cma}}(\mathcal{A}) = \Pr[\mathcal{A} \text{ outputs a valid forgery}].
  $$
\end{definition}

\begin{definition}[Key Encapsulation Mechanism (KEM)]
  A KEM consists of three algorithms:
  \begin{itemize}
    \item $\mathsf{KeyGen}(1^\lambda)$ outputs a public key $pk$ and a private key $sk$;
    \item $\mathsf{Encap}(pk)$ outputs a shared secret $K$ and a ciphertext $ct$;
    \item $\mathsf{Decap}(sk, ct)$ outputs a shared secret $K$ or the rejection symbol $\perp$.
  \end{itemize}
  Correctness requires that for all $(pk,sk)\leftarrow \mathsf{KeyGen}(1^\lambda)$ and $(K,ct)\leftarrow \mathsf{Encap}(pk)$, we have $\Pr[\mathsf{Decap}(sk,ct)=K]\ge 1-\delta$, where $\delta$ is negligible.

  Two security notions are standard:
  \begin{itemize}
    \item IND-CPA security: The challenger generates $(pk, sk)$, computes $(K_0, ct^*) \leftarrow \mathsf{Encap}(pk)$, and picks uniformly random $K_1$. It then chooses a random bit $b \in \{0,1\}$ and gives $(pk, ct^*, K_b)$ to the adversary $\mathcal{A}$. $\mathcal{A}$ outputs a guess $b'$ and wins if $b' = b$. The advantage is defined as
    $$
    \mathsf{Adv}_{\mathsf{KEM}}^{\mathsf{ind\mbox{-}cpa}}(\mathcal{A}) = \left|\Pr[b' = b] - \frac12\right|.
    $$
    In this game the adversary has no access to a decapsulation oracle.
    \item IND-CCA security: The game is the same as IND-CPA, except that the adversary may additionally adaptively query a decapsulation oracle $\mathsf{Decap}(sk, \cdot)$ (with the restriction that it cannot query $ct^*$). The adversary outputs a guess $b'$ and wins if $b' = b$. The advantage is defined as
    $$
    \mathsf{Adv}_{\mathsf{KEM}}^{\mathsf{ind\mbox{-}cca}}(\mathcal{A}) = \left|\Pr[b' = b] - \frac12\right|.
    $$
  \end{itemize}
\end{definition}

\section{Reductionist Security Analysis}
\label{app:acce}

This appendix provides a formal security model for the SSH protocol extended with KEM‑based client authentication, following the ACCE (Authenticated and Confidential Channel Establishment) framework~\cite{acce,pqacce}. The model captures both classical and post‑quantum adversaries, the latter via the $\mathsf{OExecute}$ oracle for “\textit{harvest now, decrypt later}” attacks.

\subsection{Model Syntax}

Let $n_P$ be the number of parties and denote the set of parties by $P_1,\dots,P_{n_P}$.
Each party $P_i$ generates a long-term key pair $(sk_i,pk_i)$ using a key generation algorithm $\mathsf{KeyGen}$.
In the server-only authentication part of the handshake, the server holds a signing key; for client authentication, the client holds a static KEM key pair.
All public keys are assumed to be available to the adversary.

A session is a single protocol execution at a party.  
The $s$-th session at party $P_i$ is denoted $\pi_i^s$.

\begin{definition}[Per-session variables]
  Each session maintains the following variables:
  \begin{itemize}
    \item $\rho \in \{\mathsf{init},\mathsf{resp}\}$: the role of the session owner.
    \item $\mathsf{pid} \in \{1,\dots,n_P,\bot\}$: the identifier of the alleged peer ($\bot$ means unauthenticated).
    \item $\mathsf{status} \in \{\mathsf{active},\mathsf{reject},\mathsf{accept}\}$: the current status.
    \item $k$: the session key, either $\bot$ or a tuple of subkeys. In SSH, $k = (k_e,k_d)$ where $k_e$ and $k_d$ are bidirectional authenticated encryption keys.
    \item $\mathsf{sid}$: a session identifier defined by the protocol, used to match communicating sessions.
    \item $s_e$, $s_d$: state for the stateful authenticated encryption and decryption algorithms.
    \item Additional protocol-specific state (e.g., nonces exchanged, handshake message counters).
  \end{itemize}
  For the security experiment, each session also initialises a random bit $\pi_i^s.b \leftarrow \{0,1\}$ that is used in the channel security challenge.
\end{definition}

\begin{definition}[ACCE protocol]
  An ACCE protocol is a tuple of algorithms:
  \begin{itemize}
    \item $\mathsf{KeyGen}(1^\lambda) \to (sk,pk)$: outputs a long-term key pair.
    \item Handshake algorithms $\mathsf{AlgI}_\ell$, $\mathsf{AlgR}_\ell$ (for initiator and responder):
      on input $(sk,pk)$ and an incoming message $m$, they update the session state and return an outgoing message $m'$.
      Eventually they set $\pi.\mathsf{pid}$, $\pi.\mathsf{status}$, $\pi.k$, $\pi.\mathsf{sid}$.
    \item Stateful authenticated encryption $\mathsf{Enc}(\pi.k_e, m, \pi.s_e) \to (C, \pi.s_e')$.
    \item Stateful authenticated decryption $\mathsf{Dec}(\pi.k_d, C, \pi.s_d) \to (m', \pi.s_d')$.
  \end{itemize}
  All algorithms implicitly take global protocol parameters, including the list of all trusted peer public keys.
\end{definition}

\subsection{Adversary interaction}

The adversary $\mathcal{A}$ (or $\mathcal{Q}$ in the post-quantum setting) controls all network communication. It can direct parties to initiate sessions, deliver messages, reorder, alter, or drop packets. The adversary interacts with honest parties via the following oracles.

\begin{itemize}
\item $\mathsf{OSend}(i,s,m)$: The adversary sends message $m$ to session $\pi_i^s$.  
  If $\pi_i^s$ does not exist, it is created with $\rho = \mathsf{init}$ and $\mathsf{status} = \mathsf{active}$ (using the special symbol $m = \mathsf{init}$).  
  The session processes $m$, updates its state, and returns the outgoing protocol message (or $\bot$ if the session terminates).  
  This oracle models active adversarial control over the network.

\item $\mathsf{OExecute}(i,s,j,t)$: The challenger executes a full honest protocol run between two parties $P_i$ and $P_j$, creating sessions $\pi_i^s$ and $\pi_j^t$ with matching roles ($\rho = \mathsf{init}$ for $i$, $\rho = \mathsf{resp}$ for $j$).  
  All handshake messages are generated honestly; the adversary only receives the complete transcript of the session (the sequence of messages exchanged).  
  This oracle models \emph{passive} eavesdropping, in particular ``harvest now, decrypt later'' attacks.  
  Unlike $\mathsf{OSend}$, the adversary cannot modify or inject messages into the sessions created by $\mathsf{OExecute}$.  
  In the post-quantum ACCE model, $\mathsf{OSend}$ is replaced by $\mathsf{OExecute}$; in the classical model both oracles may be available.

\item $\mathsf{OReveal}(i,s)$: If $\pi_i^s$ has accepted (i.e., $\pi_i^s.\mathsf{status} = \mathsf{accept}$), the oracle returns the session key $\pi_i^s.k$ to the adversary.  
  This models leakage of session keys (e.g., via compromise of the host).

\item $\mathsf{OCorrupt}(i)$: The adversary obtains the long-term private key $sk_i$ of party $P_i$.  
  After this query, the party is considered \emph{corrupted}.  
  This models long-term key compromise.

\item $\mathsf{OEncrypt}(i,s,m)$: If session $\pi_i^s$ has accepted and is in the channel phase, the oracle returns a ciphertext $c \leftarrow \mathsf{Enc}(\pi_i^s.k_e, m, \pi_i^s.s_e)$ and updates the encryption state.  
  This is used in the channel security experiment to provide encryption queries.

\item $\mathsf{ODecrypt}(i,s,c)$: If session $\pi_i^s$ has accepted and is in the channel phase, the oracle returns the decryption $(m', \pi_i^s.s_d) \leftarrow \mathsf{Dec}(\pi_i^s.k_d, c, \pi_i^s.s_d)$ (or $\bot$ if the ciphertext is invalid).  
  This provides decryption queries.
\end{itemize}

The adversary may adaptively interleave these queries. The security experiments impose freshness conditions (e.g., no $\mathsf{OReveal}$ on the test session or its partner, no $\mathsf{OCorrupt}$ before acceptance, etc.) as defined in the channel security definitions.

\subsection{ACCE Security}

We now define the security properties for the SSH protocol with mutual authentication (server authenticated via signatures, client authenticated via KEM-based challenge-response). The definitions follow the ACCE framework, separating authentication and channel security.

\begin{definition}[Matching sessions]
  We say that session $\pi_j^t$ \textbf{matches} $\pi_i^s$ if:
  \begin{enumerate}
    \item $\pi_i^s.\rho \neq \pi_j^t.\rho$ (the roles are opposite);
    \item $\pi_i^s.\mathsf{c} = \pi_j^t.\mathsf{c}$ (the same ciphersuite is used);
    \item $\pi_i^s.\mathsf{sid}$ prefix-matches $\pi_j^t.\mathsf{sid}$, i.e., either:
      \begin{itemize}
        \item if $\pi_i^s$ sent the last message in $\pi_i^s.\mathsf{sid}$, then $\pi_j^t.\mathsf{sid}$ is a prefix of $\pi_i^s.\mathsf{sid}$; or
        \item if $\pi_j^t$ sent the last message in $\pi_i^s.\mathsf{sid}$, then $\pi_i^s.\mathsf{sid} = \pi_j^t.\mathsf{sid}$.
      \end{itemize}
  \end{enumerate}
  In our analysis we consider the standard SSH handshake (without abbreviated handshake), so matching sessions will have identical session identifiers: $\pi_i^s.\mathsf{sid} = \pi_j^t.\mathsf{sid}$.
\end{definition}

\begin{definition}[(Post-quantum) Authentication security]\label{pq auth security}
  Let $\pi_i^s$ be a session. We say that $\pi_i^s$ accepts maliciously if:
  \begin{itemize}
    \item $\pi_i^s.\mathsf{status} = \mathsf{accept}$;
    \item $\pi_i^s.\mathsf{pid} = j \neq \bot$ and no $\mathsf{OCorrupt}(j)$ query was issued before $\pi_i^s$ accepted;
    \item but there is no unique session $\pi_j^t$ that matches $\pi_i^s$.
  \end{itemize}
  For mutual authentication, both initiator (client) and responder (server) sessions can accept maliciously.
  Define $\mathrm{Adv}_{\mathsf{SSH}}^{\mathsf{acce}\mbox{-}\mathsf{mu\mbox{-}auth}}(\mathcal{A})$ as the probability that, when a PPT adversary $\mathcal{A}$ terminates in the ACCE experiment, there exists \emph{any} session (initiator or responder) that has accepted maliciously.

  The post-quantum mutual-authentication experiment is defined identically, except that the adversary is a quantum polynomial-time adversary $\mathcal{Q}$. We define $\mathrm{Adv}_{\mathsf{SSH}}^{\mathsf{pq\text{-}acce\text{-}mu\text{-}auth}}(\mathcal{Q})$ as the probability that, when $\mathcal{Q}$ terminates in the same ACCE authentication experiment, there exists \emph{any} session, initiator or responder, that has accepted maliciously.
\end{definition}

\begin{definition}[Channel security]\label{channel security}
  Suppose a PPT adversary $\mathcal{A}$ has access to $\mathsf{OSend}$, $\mathsf{OExecute}$, $\mathsf{OReveal}$, $\mathsf{OCorrupt}$, $\mathsf{OEncrypt}$, $\mathsf{ODecrypt}$. At the end of the experiment, $\mathcal{A}$ outputs $(i,s,b')$. We say that $\mathcal{A}$ answers the encryption challenge correctly if:
  \begin{itemize}
    \item $\pi_i^s.\mathsf{status} = \mathsf{accept}$;
    \item No $\mathsf{OCorrupt}(j)$ query was issued before $\pi_i^s$ accepted where $\pi_i^s.\mathsf{pid} = j$ and $\pi_i^s$ does \textbf{not} have a matching session;
    \item No $\mathsf{OReveal}(i,s)$ query was issued;
    \item For any $\pi_j^t$ that matches $\pi_i^s$, no $\mathsf{OReveal}(j,t)$ query was issued;
    \item $\pi_i^s.b = b'$.
  \end{itemize}
  Define $\mathrm{Adv}_{\mathsf{SSH}}^{\mathsf{acce\text{-}aenc}}(\mathcal{A}) = \left| p - \frac12 \right|$, where $p$ is the probability that $\mathcal{A}$ answers the encryption challenge correctly and either $\pi_i^s.\rho = \mathsf{init}$ or both $\pi_i^s.\rho = \mathsf{resp}$ and a matching session exists.
\end{definition}

In the post-quantum setting, the adversary is a quantum polynomial-time (QPT) algorithm and the $\mathsf{OSend}$ oracle is replaced by $\mathsf{OExecute}$. This models passive eavesdropping with future quantum decryption capability.

\begin{definition}[Post-quantum channel security]
  Let $\mathcal{Q}$ be a QPT adversary with access to $\mathsf{OExecute}$, $\mathsf{OReveal}$, $\mathsf{OCorrupt}$, $\mathsf{OEncrypt}$, $\mathsf{ODecrypt}$. The experiment is the same as in Definition~\ref{channel security}, except that $\mathcal{Q}$ cannot use $\mathsf{OSend}$ (all honest sessions are generated via $\mathsf{OExecute}$). We say $\mathcal{Q}$ answers the encryption challenge correctly under the same conditions. Define $\mathrm{Adv}_{\mathsf{SSH}}^{\mathsf{pq}\text{-}\mathsf{acce}\text{-}\mathsf{aenc}}(\mathcal{Q}) = \left| p - \frac12 \right|$ analogously.
\end{definition}

Because $\mathsf{OExecute}$ guarantees that every session has a matching partner (as it creates both sides honestly), authentication is automatically satisfied. Therefore post-quantum ACCE security reduces to channel security only.

\subsection{Mutual Auth. Mode Security}\label{mutual-auth-proof}

\begin{figure*}
\centering
\begin{pcvstack}[center,space=2em]  

    \begin{pchstack}[boxed,center,space=2em]

        \begin{pcvstack}[space=0.5em]
            \procedure[linenumbering=false, mode=text, headlinecmd={}, headlinesep=0pt, bodylinesep=0pt]{\textbf{\large Negotiation}}{}
            \procedure[linenumbering, mode=text]{1. init $\rightarrow$ resp: KEXINIT}{
                $r_C \sample \{0,1\}^\mu$ \\
                send $\mathsf{KEXINIT}\leftarrow (r_C, \overrightarrow{\mathsf{SP}}_C)$ \\
                $\pi.\rho \leftarrow \mathsf{init}$ \\
                $\pi.\mathsf{status} \leftarrow \mathsf{active}$
            }
            \procedure[linenumbering, mode=text]{2. resp $\rightarrow$ init: KEXREPLY}{
                $r_S \sample \{0,1\}^\mu$ \\
                send $\mathsf{KEXREPLY}\leftarrow (r_S, \overrightarrow{\mathsf{SP}}_S)$ \\
                $\pi.\rho \leftarrow \mathsf{resp}$ \\
                $\pi.\mathsf{status} \leftarrow \mathsf{active}$ \\
                $\pi.c \leftarrow \mathsf{neg}(\overrightarrow{\mathsf{SP}}_C, \overrightarrow{\mathsf{SP}}_S)$
            }
            \procedure[linenumbering, mode=text]{3. init}{
                $\pi.c \leftarrow \mathsf{neg}(\overrightarrow{\mathsf{SP}}_C, \overrightarrow{\mathsf{SP}}_S)$
            }
        \end{pcvstack}

        \begin{pcvstack}[space=0.5em]
            \procedure[linenumbering=false, mode=text, headlinecmd={}, headlinesep=0pt, bodylinesep=0pt]{\textbf{\large Signed KEX (for both modes)}}{}
            \procedure[linenumbering, mode=text]{4. init $\rightarrow$ resp: KEX\_KEM\_INIT}{
                $(pk, sk) \leftarrow \mathsf{KEM}_{\pi.c}.\mathsf{Gen}()$ \\
                send $\mathsf{KEX\_KEM\_INIT} \leftarrow pk$
            }
            \procedure[linenumbering, mode=text]{5. resp $\rightarrow$ init: KEX\_KEM\_REPLY and NEWKEYS}{
                $(K, ct) \leftarrow \mathsf{KEM}_{\pi.c}.\mathsf{Encap}(pk)$ \\
                $x \leftarrow V_C \parallel V_S \parallel \mathsf{KEXINIT} \parallel \mathsf{KEXREPLY} \parallel pk_{S,\pi.c} \parallel pk \parallel ct$ \\
                $(\pi.\mathsf{sid}, \pi.k) \leftarrow \mathsf{PRF}_{\mathsf{SSH}}(K, x)$ \\
                $\sigma_S \leftarrow \mathsf{DSS}_{\pi.c}.\mathsf{Sign}(sk_{S,\pi.c}, \pi.\mathsf{sid})$ \\
                send $\mathsf{KEX\_KEM\_REPLY} \leftarrow (ct, pk_{S,\pi.c}, \sigma_S)$ \\
                send $\mathsf{NEWKEYS}$
            }
            \procedure[linenumbering, mode=text]{6. init $\rightarrow$ resp: NEWKEYS}{
                $K \leftarrow \mathsf{KEM}_{\pi.c}.\mathsf{Decap}(sk, ct)$ \\
                $x \leftarrow V_C \parallel V_S \parallel \mathsf{KEXINIT} \parallel \mathsf{KEXREPLY} \parallel pk_{S,\pi.c} \parallel pk \parallel ct$ \\
                $(\pi.\mathsf{sid}, \pi.k) \leftarrow \mathsf{PRF}_{\mathsf{SSH}}(K, x)$ \\
                \textbf{if} $\mathsf{DSS}_{\pi.c}.\mathsf{Vrfy}(pk_{S,\pi.c}, \sigma_S, \pi.\mathsf{sid}) = \bot$ \textbf{then} \\
                \qquad $\pi.\mathsf{status} \leftarrow \mathsf{reject}$ and terminate \\
                $\pi.\mathsf{pid} \leftarrow S$, the owner of $pk_{S,\pi.c}$ \\
                send $\mathsf{NEWKEYS}$
            }
        \end{pcvstack}
    \end{pchstack}

    \begin{pchstack}[boxed,center,space=2em]
        \begin{pcvstack}[space=0.5em]
            \procedure[linenumbering=false, mode=text, headlinecmd={}, headlinesep=0pt, bodylinesep=0pt]{\textbf{\large Mutual Auth. Mode}}{}
            \procedure[linenumbering, mode=text]{7. init $\rightarrow$ resp: AUTH\_REQUEST}{
                $A \leftarrow username \parallel service \parallel \mathsf{publickey\text{-}kem} \parallel alg \parallel pk_{C,\pi.c}$ \\
                \qquad \textbf{/} $alg$ specifies $\mathsf{KEM}'_{\pi.c}$ \\
                send $\mathsf{AUTH\_REQUEST} \leftarrow A$
            }
            \procedure[linenumbering, mode=text]{8. resp $\rightarrow$ init: AUTH\_CHALLENGE or AUTH\_FAILURE}{
                \textbf{if} $\mathsf{Check}(username, service, \mathsf{publickey\text{-}kem})$ \textbf{then} \\
                \qquad $\pi.\mathsf{status} \leftarrow \mathsf{reject}$ and terminate \\
                \textbf{if} $\pi.\mathsf{status} = \mathsf{active}$ \textbf{then} \\
                \qquad $(K', ct') \leftarrow \mathsf{KEM}'_{\pi.c}.\mathsf{Encap}(pk_{C,\pi.c})$ \\
                \qquad $A' \leftarrow alg \parallel pk_{C,\pi.c} \parallel ct'$ \\
                \qquad send $\mathsf{AUTH\_CHALLENGE} \leftarrow A'$ \\
                \textbf{if} $\pi.\mathsf{status} = \mathsf{reject}$ \textbf{then} \\
                \qquad send $\mathsf{AUTH\_FAILURE}$ and terminate
            }
            \procedure[linenumbering, mode=text]{9. init $\rightarrow$ resp: AUTH\_RESPONSE}{
                $K' \leftarrow \mathsf{KEM}'_{\pi.c}.\mathsf{Decap}(sk_{C,\pi.c}, ct')$ \\
                $\mathsf{ctx} \leftarrow A \parallel \mathsf{AUTH\_CHALLENGE}$ \\
                $y \leftarrow \pi.\mathsf{sid} \parallel \mathsf{ctx}$ \\
                $ca \leftarrow \mathsf{PRF}_{\mathsf{CA}}(K', y)$\\
                send $\mathsf{AUTH\_RESPONSE} \leftarrow ca$
            }
        \end{pcvstack}
        \begin{pcvstack}[space=0.5em]
            \procedure[linenumbering, mode=text]{10. resp $\rightarrow$ init: AUTH\_SUCCESS or AUTH\_FAILURE}{
                $\mathsf{ctx}' \leftarrow \mathsf{AUTH\_REQUEST} \parallel A'$ \\
                $y' \leftarrow \pi.\mathsf{sid} \parallel \mathsf{ctx}'$ \\
                $ca' \leftarrow \mathsf{PRF}_{\mathsf{CA}}(K', y')$ \\
                \textbf{if} $ca' \neq ca$ \textbf{then} \\
                \qquad $\pi.\mathsf{status} \leftarrow \mathsf{reject}$ \\
                \textbf{if} $\pi.\mathsf{status} = \mathsf{active}$ \textbf{then} \\
                \qquad $\pi.\mathsf{status} \leftarrow \mathsf{accept}$ \\
                \textbf{if} $\pi.\mathsf{status} = \mathsf{accept}$ \textbf{then} \\
                \qquad send $\mathsf{AUTH\_SUCCESS}$ \\
                \textbf{if} $\pi.\mathsf{status} = \mathsf{reject}$ \textbf{then} \\
                \qquad send $\mathsf{AUTH\_FAILURE}$ and terminate
            }
            \procedure[linenumbering, mode=text]{11. init}{
                \textbf{if} $\mathsf{AUTH\_FAILURE}$ \textbf{then} \\
                \qquad $\pi.\mathsf{status} \leftarrow \mathsf{reject}$ and terminate \\
                \textbf{if} $\mathsf{AUTH\_SUCCESS}$ \textbf{then} \\
                \qquad $\pi.\mathsf{status} \leftarrow \mathsf{accept}$
            }
        \end{pcvstack}
    \end{pchstack}
\end{pcvstack}

\caption{Description of post-quantum SSH handshake protocol with KEM-based Mutual Auth. Mode. In steps 1 and 2, algorithm $\mathsf{neg}$ negotiates a ciphersuite to be used in the protocol. $V_C$ and $V_S$ are version strings. In steps 5 and 6, $\mathsf{PRF}_{\mathsf{SSH}}$ is defined in~\cite{pqacce}. In step 8, algorithm $\mathsf{Check}$ checks the accessibility of username for service via some authentication mechanism, $\mathsf{none}$ or $\mathsf{publickey\mbox{-}kem}$. In step 9, $\mathsf{PRF}_{\mathsf{CA}}$ is defined in Section~\ref{subsec:prfca}. $\mathsf{ctx}\leftarrow A||A'$ denotes the canonical $\mathsf{Encode}_{\mathsf{SSH}}(\mathsf{msg})$ defined in Section III, including the authentication request, the challenge containing $ct'$, and the $\mathsf{AUTH\_RESPONSE}$ message type. We note that from steps 7–11, their protocol is run within $\mathsf{BSAE}$ using $\pi.k$.}
\label{fig:ssh-full-protocol}
\end{figure*}

\begin{proof}
We define $\mathsf{break}_k$ as the event that a session accepts maliciously (Definition~\ref{pq auth security}) in Game $k$. The proof proceeds through a sequence of games, following the protocol flow and state definitions depicted in Figure~\ref{fig:ssh-full-protocol}.

\textbf{Game 0.}  
This is the original ACCE authentication experiment where the adversary $\mathcal{A}$ interacts with the real protocol. Hence
\[
\mathsf{Adv}_{\mathsf{SSH}}^{\mathsf{acce\mbox{-}mu\mbox{-}auth}}(\mathcal{A}) = \Pr[\mathsf{break}_0].
\]

\textbf{Game 1.}  
The challenger aborts if any nonces (from $\mathsf{KEXINIT}$ or $\mathsf{KEXREPLY}$) collide. There are at most $n_P n_S$ sessions, each sampling a nonce of length $\mu$. The probability of a collision is at most $(n_P n_S)^2 / 2^{\mu+1}$, Thus
\[
\Pr[\mathsf{break}_0] \le \Pr[\mathsf{break}_1] + \frac{(n_P n_S)^2}{2^{\mu}}.
\]

\textbf{Game 2.}  
The challenger maintains a list of all inputs and outputs of the hash function $\mathsf{H}$. If two distinct inputs produce the same output, the game aborts and we output the collision to a collision-resistance challenger. Therefore
\[
\Pr[\mathsf{break}_1] \le \Pr[\mathsf{break}_2] + \mathsf{Adv}_{\mathsf{H}}^{\mathsf{coll}}(\mathcal{B}_1),
\]
where $\mathcal{B}_1$ is a reduction against the collision resistance of $\mathsf{H}$.

\textbf{Game 3.}  
The challenger guesses the first session $\pi_i^s$ that will accept maliciously. There are $n_P$ parties and at most $n_S$ sessions per party, so the guess is correct with probability $1/(n_P n_S)$. Hence
\[
\Pr[\mathsf{break}_2] \le n_P n_S \cdot \Pr[\mathsf{break}_3].
\]
From now on we fix this session $\pi_i^s$. Its role (initiator or responder) determines which case we consider. Note that $\pi_i^s.\mathsf{pid}=j\neq \perp$ and no $\mathsf{OCorrupt}(j)$ query was issued before $\pi_i^s$ accepted.

\paragraph{\textbf{Case A: $\mathbf{\pi_i^s}$ is a client session (initiator)}}
In this case the client accepts but there is no matching server session. 
We focus on the event $\mathsf{break}_3 \land \text{client}$, i.e., the first malicious acceptance occurs at a client session. For brevity, we denote this event by $\mathsf{break}_{3^{\mathsf{C}}}$. The client verifies a server signature $\sigma_S$ on the session identifier $\pi.\mathsf{sid}$. Because nonce and hash collisions are excluded by Games~1 and~2, each $\mathsf{sid}$ is unique. We now proceed with a sequence of games internal to this case.

\textbf{Game A1.}  
We abort the simulation if the test session $\pi_i^s$ accepts after receiving a signature that was never output by a session with a matching session identifier. Since nonce and hash collisions have been excluded, all values to be signed are distinct. Hence any such abort corresponds to a signature forgery.

We construct a reduction $\mathcal{B}_2$ that simulates the protocol as in Game~3. $\mathcal{B}_2$ receives a public key $pk^*$ from an $\mathsf{EUF\text{-}CMA}$ signature challenger, guesses the server identity $j = \pi_i^s.\mathsf{pid}$ (cost factor $n_P$), and sets $P_j$'s public key to $pk^*$. For all other servers keys are generated normally. Whenever $\mathcal{B}_2$ needs to sign a message on behalf of $P_j$, it queries the $\mathsf{EUF\text{-}CMA}$ signing oracle. If the test session $\pi_i^s$ accepts maliciously, then the pair $(\pi.\mathsf{sid}, \sigma_S)$ is a valid forgery, which $\mathcal{B}_2$ outputs to its challenger. Therefore
\[
\Pr[\mathsf{break}_{3^{\mathsf{C}}}] \le \Pr[\mathsf{break}_{\text{A1}}] + n_P \cdot \mathsf{Adv}_{\mathsf{DSS}}^{\mathsf{euf\text{-}cma}}(\mathcal{B}_2).
\]

From this point onward, the client's acceptance implies that the ephemeral KEM ciphertext $ct$ is authentic because it is bound by the server's signature.

\textbf{Game A2.}  
In this game, we replace the transport-layer KEM secret $K$ used by the target client session $\pi_i^s$ and its server partner $\pi_j^t$ with a uniformly random value $K^*$. We construct a reduction $\mathcal{B}_3$ that interacts with $\mathcal{A}$ and embeds an $\mathsf{IND\text{-}CPA}$ challenge into the transcript of these sessions.

$\mathcal{B}_3$ participates in the $\mathsf{IND\text{-}CPA}$ experiment for $\mathsf{KEM_e}$ and receives the challenge public key $pk^*$ together with a challenge ciphertext $ct^*$ and challenge key $K^*$. When the target client session $\pi_i^s$ would generate its ephemeral $\mathsf{KEM_e}$ key pair, $\mathcal{B}_3$ uses $pk^*$ as the client's public key; no corresponding secret key is required in the simulation. When the partner server session $\pi_j^t$ would encapsulate to this key, $\mathcal{B}_3$ supplies $(K^*,ct^*)$ as the encapsulation output. At the corresponding client decapsulation step, $\mathcal{B}_3$ directly supplies the same value $K^*$. Because malicious acceptance has already been excluded, the target sessions communicate without adversarial modification, and the challenge ciphertext received by $\pi_i^s$ is exactly $ct^*$.

If the challenge bit $b$ is $0$, then $(ct^*,K^*)$ is distributed as an honest encapsulation under $pk^*$, and the simulation is identical to Game~A1. If $b=1$, then $K^*$ is uniformly random and independent of $ct^*$, and the simulation is identical to Game~A2. Hence
\[
\Pr[\mathsf{break}_{\text{A1}}] \le \Pr[\mathsf{break}_{\text{A2}}] + \mathsf{Adv}_{\mathsf{KEM_e}}^{\mathsf{ind\text{-}cpa}}(\mathcal{B}_3).
\]

\textbf{Game A3.}  
We replace the values $H,k_1,\dots,k_6$ computed by $\pi_i^s$ and $\pi_j^t$ as $\mathsf{PRF}_{\mathsf{SSH}}(K^*, \mathsf{sid})$ with independent uniformly random strings $H^*, k_1^*, \dots, k_6^*$. Let $S = H \| k_1 \| \dots \| k_6$ and $S^* = H^* \| k_1^* \| \dots \| k_6^*$. A reduction $\mathcal{B}_4$ plays the PRF security game against $\mathsf{PRF}_{\mathsf{SSH}}$: it simulates the protocol as in Game~A2, but for the test session and its partner it forwards the $\mathsf{PRF}$ computation to its own challenger. If the challenger returns the real PRF output we are in Game~A2; if it returns a random string we are in Game~A3. Thus
\[
\Pr[\mathsf{break}_{\text{A2}}] \le \Pr[\mathsf{break}_{\text{A3}}] + \mathsf{Adv}_{\mathsf{PRF}_{\mathsf{SSH}}}^{\mathsf{prf}}(\mathcal{B}_4).
\]

\textbf{Game A4.}  
Now the keys $k_1^*, \dots, k_6^*$ are uniformly random and independent of the handshake. We replace the stateful authenticated encryption used in the channel phase by a $\mathsf{BSAE}$ challenger. A reduction $\mathcal{B}_5$ initialises a $\mathsf{BSAE}$ challenger and forwards all $\mathsf{Enc}$ and $\mathsf{Dec}$ operations of the test session $\pi_i^s$ and its matching partner $\pi_j^t$ to this challenger. The challenger uses either real encryption/decryption (if its internal bit is $0$) or returns random ciphertexts (if the bit is $1$). This replacement is sound because the keys are already random.

If $\pi_i^s$ accepts maliciously after this game, the adversary must have forged a valid ciphertext under the $\mathsf{BSAE}$ scheme, which would break $\mathsf{BSAE}$ security. Hence
\[
\Pr[\mathsf{break}_{\text{A3}}] \le \Pr[\mathsf{break}_{\text{A4}}] + \mathsf{Adv}_{\mathsf{BSAE}}^{\mathsf{bsae}}(\mathcal{B}_5).
\]

In Game~A4, all signatures are generated by legitimate parties, all session identifiers are unique, and the channel keys are independent random strings. The client $\pi_i^s$ can only accept if it receives a valid server signature (already excluded by Game~A1) and if the channel messages are authentic under the $\mathsf{BSAE}$ scheme. The $\mathsf{BSAE}$ challenger guarantees that no adversary can forge a ciphertext, so $\mathsf{break}_{\text{A4}}$ occurs with probability $0$. Therefore
\[
\Pr[\mathsf{break}_{\text{A4}}] = 0.
\]

Collecting the inequalities for Case~A, we obtain
\begin{align*}
\Pr[\mathsf{break}_3 \land \text{client}] 
&\le n_P \Bigl( \mathsf{Adv}_{\mathsf{DSS}}^{\mathsf{euf\text{-}cma}}(\mathcal{B}_2) \\
&\qquad + \mathsf{Adv}_{\mathsf{KEM_e}}^{\mathsf{ind\text{-}cpa}}(\mathcal{B}_3) \\
&\qquad + \mathsf{Adv}_{\mathsf{PRF}_{\mathsf{SSH}}}^{\mathsf{prf}}(\mathcal{B}_4) \\
&\qquad + \mathsf{Adv}_{\mathsf{BSAE}}^{\mathsf{bsae}}(\mathcal{B}_5) \Bigr).
\end{align*}

\paragraph{\textbf{Case B: $\mathbf{\pi_i^s}$ is a server session (responder)}}
Now the server accepts but there is no matching client session. We consider the event $\mathsf{break}_3 \land \mathsf{server}$, denoted $\mathsf{break}_{3^{\mathsf{s}}}$. In our mutual-authentication protocol, the server verifies a client authentication value $ca = \mathsf{PRF}_{\mathsf{CA}}(K_C, y)$, where $(K_C,ct_C)$ is generated by encapsulating to the client's authorized long-term KEM public key and $y=\mathsf{sid}\parallel\mathsf{ctx}$ is the session-bound PRF input.

\textbf{Game B1.}
We guess the client identity $j = \pi_i^s.\mathsf{pid}$; this is correct with probability $1/n_P$. We replace the static $\mathsf{KEM}_{\mathsf{C}}$ public key of client $P_j$ with a challenge public key $pk^*$ obtained from an $\mathsf{IND\text{-}CCA}$ challenger. In the target authentication attempt, the server's encapsulation to $pk^*$ is simulated using the challenge ciphertext and challenge key. For other honest sessions involving $P_j$, any required decapsulation of non-challenge ciphertexts is answered using the $\mathsf{IND\text{-}CCA}$ decapsulation oracle. If $\mathcal{A}$ distinguishes this game from the previous one, we obtain an adversary against the $\mathsf{IND\text{-}CCA}$ security of $\mathsf{KEM}_{\mathsf{C}}$. Hence
\[
\Pr[\mathsf{break}_{3^{\mathsf{s}}}]
\le
n_P \Bigl(
  \Pr[\mathsf{break}_{\text{B1}}]
  +
  \mathsf{Adv}_{\mathsf{KEM}_{\mathsf{C}}}^{\mathsf{ind\text{-}cca}}(\mathcal{B}_6)
\Bigr).
\]

\textbf{Game B2.}
In this game, the server's expected authentication value $ca'=\mathsf{PRF}_{\mathsf{CA}}(K_C,y)$ for the target authentication attempt is replaced by a uniformly random string of the same length. Since the target key $K_C$ is hidden from the adversary after the $\mathsf{IND\text{-}CCA}$ replacement, this hop reduces to the PRF security of $\mathsf{PRF}_{\mathsf{CA}}$. A reduction $\mathcal{B}_7$ simulates the game and forwards the relevant query to a PRF challenger. Thus
\[
\Pr[\mathsf{break}_{\text{B1}}]
\le
\Pr[\mathsf{break}_{\text{B2}}]
+
\mathsf{Adv}_{\mathsf{PRF}_{\mathsf{CA}}}^{\mathsf{prf}}(\mathcal{B}_7).
\]

\textbf{Game B3.}
Let
\[
y^\star=\pi^\star.\mathsf{sid}\parallel\mathsf{ctx}^\star
\]
be the input to $\mathsf{PRF}_{\mathsf{CA}}$ in the target authentication attempt.  Here $\mathsf{ctx}^\star\leftarrow A^\star||A^{\prime\star}$ is the common authentication context constructed for the target attempt. It abstracts the canonical encoding $\mathsf{Encode}_{\mathsf{SSH}}(\mathsf{msg}^\star)$ defined in Section~III, including the authentication request and challenge ciphertext, while $\pi^\star.\mathsf{sid}$ is established by the corresponding transport session. Although the reduction may need to evaluate $\mathsf{PRF}_{\mathsf{CA}}$ when simulating honest non-target authentication responses, we condition on the event that $y^\star$ is not evaluated before $\mathcal{A}$ sends its response $ca$. In the absence of a matching client session, such a query would require another simulated authentication attempt to use the same authentication context and the same session identifier. The former fixes the target challenge and method transcript, while the latter is derived from the transport-layer exchange and contains the randomness already protected by the nonce-collision and exchange-hash-collision exclusions in the previous games. Thus, the probability that $y^\star$ is hit through a non-target simulation is negligible and is absorbed into the preceding bad-event bounds.

The value $ca'$ is therefore a uniformly random $\ell$-bit string and is independent of the adversary's view before it sends $ca$. Hence,
\[
  \Pr[\mathsf{Guess}] \leq 2^{-\ell}.
\]
Since this game only records an event and does not change the simulation,
\[
\Pr[\mathsf{break}_{\text{B2}}]
=
\Pr[\mathsf{break}_{\text{B3}}].
\]

\textbf{Game B4.}
This game is identical to Game~B3, except that the target server session always rejects the received authentication response. Games B3 and B4 differ only if $\mathsf{Guess}$ occurs. Therefore,
\[
\Pr[\mathsf{break}_{\text{B3}}]
\le
\Pr[\mathsf{break}_{\text{B4}}] + 2^{-\ell}.
\]
In Game~B4, the target server session cannot accept maliciously, and hence
\[
\Pr[\mathsf{break}_{\text{B4}}] = 0.
\]

Combining the steps for Case~B, we obtain
\begin{align*}
\Pr[\mathsf{break}_3 \land \text{server}]
&\le n_P \Bigl( \mathsf{Adv}_{\mathsf{KEM}_{\mathsf{C}}}^{\mathsf{ind\text{-}cca}}(\mathcal{B}_6) \\
&\qquad + \mathsf{Adv}_{\mathsf{PRF}_{\mathsf{CA}}}^{\mathsf{prf}}(\mathcal{B}_7) + 2^{-\ell} \Bigr).
\end{align*}

\paragraph{\textbf{Result analysis}}  
Combining the two cases with the losses from Games~1–3, we have
\begin{align*}
    \mathsf{Adv}_{\mathsf{SSH}}^{\mathsf{acce\text{-}mu\text{-}auth}}(\mathcal{A})
    &\le \frac{(n_P n_S)^2}{2^{\mu}} + \mathsf{Adv}_{\mathsf{H}}^{\mathsf{coll}}(\mathcal{B}_1) \\
    \quad &+ n_P^2 n_S \Bigl( 
            \mathsf{Adv}_{\mathsf{DSS}}^{\mathsf{euf\text{-}cma}}(\mathcal{B}_2)
            + \mathsf{Adv}_{\mathsf{KEM}_{\mathsf{e}}}^{\mathsf{ind\text{-}cpa}}(\mathcal{B}_3) \\
    \quad &+ \mathsf{Adv}_{\mathsf{PRF}_{\mathsf{SSH}}}^{\mathsf{prf}}(\mathcal{B}_4)
            + \mathsf{Adv}_{\mathsf{BSAE}}^{\mathsf{bsae}}(\mathcal{B}_5) \\
    \quad &+ \mathsf{Adv}_{\mathsf{KEM}_{\mathsf{C}}}^{\mathsf{ind\text{-}cca}}(\mathcal{B}_6)
            + \mathsf{Adv}_{\mathsf{PRF_{CA}}}^{\mathsf{prf}}(\mathcal{B}_7) + 2^{-\ell} \Bigr).
\end{align*}
This completes the proof.
\end{proof}

\section{Concurrent Server Evaluation}
\label{app:concurrent-eval}

We compare KEMUAuth with ML-DSA-65 as the client-authentication method while fixing the transport key exchange to \texttt{mlkem768x25519-sha256} and the server host-key algorithm to \texttt{ssh-mldsa-65}. The tests run over loopback without emulated delay or packet loss. The complete \texttt{sshd} process tree is pinned to two dedicated CPU cores and placed in a cgroup v2, while the client workers use a disjoint CPU set. Connection reuse and non-tested authentication methods are disabled, and each worker repeatedly establishes a fresh SSH connection, completes authentication, executes \texttt{true}, and closes the connection. For each worker concurrency $N\in\{1,8,16,32,64\}$, we perform five runs with a 5-second warm-up, a 30-second measurement window, and a 10-second cooldown. Throughput and latency are derived from completed connections, while CPU time and aggregate peak memory are collected from \texttt{cpu.stat} and \texttt{memory.peak}. We report the median result across the five runs.

\begin{table}[htbp]
\centering
\caption{Concurrent SSH throughput and server CPU cost}
\label{tab:concurrent-throughput}
\scriptsize
\setlength{\tabcolsep}{3.5pt}
\renewcommand{\arraystretch}{1.3}
\providecommand{\thickhline}{\noalign{\hrule height 0.8pt}}
\providecommand{\spacedhline}{\noalign{\vskip 1.5pt}\hline\noalign{\vskip 1.5pt}}

\begin{tabular}{c l| r r r}
\thickhline
\noalign{\vskip 2pt}
$N$ & Authentication & Successful & Server CPU & CPU per \\
& Method & Throughput & Utilization & Connection \\
& & (conn/s) & (\%) & (ms) \\
\spacedhline

1
& KEMUAuth  & 11.36 & 11.2 & 19.67 \\
1
& ML-DSA-65 & 10.87 & 10.7 & 19.70 \\
\spacedhline

8
& KEMUAuth  & 84.48 & 74.4 & 17.57 \\
8
& ML-DSA-65 & 81.75 & 72.5 & 17.72 \\
\spacedhline

16
& KEMUAuth  & 107.99 & 97.6 & 18.06 \\
16
& ML-DSA-65 & 106.46 & 97.0 & 18.22 \\
\spacedhline

32
& KEMUAuth  & 107.65 & 99.5 & 18.49 \\
32
& ML-DSA-65 & 106.48 & 99.5 & 18.69 \\
\spacedhline

64
& KEMUAuth  & 106.27 & 99.7 & 18.97 \\
64
& ML-DSA-65 & 105.90 & 99.8 & 19.06 \\

\noalign{\vskip 2pt}
\thickhline
\end{tabular}%

\end{table}

Table~\ref{tab:concurrent-throughput} shows that KEMUAuth achieves slightly higher successful throughput and lower CPU time per completed connection across the tested concurrency levels. At $N=1$ and $N=8$, its connection rate is 4.5\% and 3.3\% higher than that of ML-DSA-65, respectively. This represents a modest advantage while spare server capacity remains available. As the CPU approaches saturation, the throughput difference decreases to 1.4\%, 1.1\%, and 0.35\% at $N=16$, $N=32$, and $N=64$, respectively. The two methods therefore provide closely matched saturated throughput. This narrowing is expected because a complete SSH connection also includes transport key exchange, server authentication, process management, packet processing, and scheduling, which dilute the primitive-level difference. The limited change in throughput from $N=32$ to $N=64$ indicates that the server has reached its processing capacity, so additional concurrency primarily increases queueing rather than the connection completion rate. KEMUAuth nevertheless retains slightly lower CPU time per completed connection throughout the experiment, while the two methods remain closely matched under saturated load.

\begin{table}[htbp]
\centering
\caption{Tail latency and aggregate memory under concurrent load}
\label{tab:concurrent-latency-memory}
\scriptsize
\setlength{\tabcolsep}{5.0pt}
\renewcommand{\arraystretch}{1.3}
\providecommand{\thickhline}{\noalign{\hrule height 0.8pt}}
\providecommand{\spacedhline}{\noalign{\vskip 1.5pt}\hline\noalign{\vskip 1.5pt}}

\begin{tabular}{c l| r r}
\thickhline
\noalign{\vskip 2pt}
$N$ & Authentication & $95^{th}$ Latency & Peak Memory \\
& Method & (ms) & (MB) \\
\spacedhline

1
& KEMUAuth  & 79  & 9.3 \\
1
& ML-DSA-65 & 83  & 9.3 \\
\spacedhline

8
& KEMUAuth  & 94  & 39.6 \\
8
& ML-DSA-65 & 97  & 40.1 \\
\spacedhline

16
& KEMUAuth  & 158 & 74.4 \\
16
& ML-DSA-65 & 161 & 74.6 \\
\spacedhline

32
& KEMUAuth  & 317 & 133.3 \\
32
& ML-DSA-65 & 321 & 133.5 \\
\spacedhline

64
& KEMUAuth  & 665 & 242.9 \\
64
& ML-DSA-65 & 680 & 244.4 \\

\noalign{\vskip 2pt}
\thickhline
\end{tabular}%

\end{table}

Table~\ref{tab:concurrent-latency-memory} shows that the $95^{th}$-percentile latency increases as connections queue behind the saturated CPU, rising from 79--83\,ms at $N=1$ to 665--680\,ms at $N=64$. KEMUAuth retains a slightly lower tail latency at every tested concurrency level, while both methods follow nearly identical scaling trends. Aggregate peak memory also remains closely matched and reaches 242.9\,MB for KEMUAuth and 244.4\,MB for ML-DSA-65 at $N=64$. Since this memory is dominated by the OpenSSH processes and common per-connection state, the additional KEM challenge state produces no visible aggregate-memory penalty in this experiment. Its protocol-specific size and lifecycle are evaluated separately in the pending-state experiment.

\begin{table}[htbp]
\centering
\caption{Estimated memory scaling under pending authentication state}
\label{tab:pending-state-memory}
\scriptsize
\setlength{\tabcolsep}{2.6pt}
\renewcommand{\arraystretch}{1.3}
\providecommand{\thickhline}{\noalign{\hrule height 0.8pt}}
\providecommand{\spacedhline}{\noalign{\vskip 1.5pt}\hline\noalign{\vskip 1.5pt}}

\begin{tabular}{c| r r r| r}
\thickhline
\noalign{\vskip 2pt}
Pending & \multicolumn{2}{c}{Estimated Peak Memory} &
Memory & Explicit KEM \\
Connections & KEMUAuth & ML-DSA-65 & Difference &
State \\
$P$ & (MiB) & (MiB) & (MiB) & (MiB) \\
\spacedhline

16  & 74.28  & 70.06  & 4.22 & 0.036 \\
32  & 124.03 & 119.86 & 4.17 & 0.073 \\
64  & 223.51 & 219.45 & 4.05 & 0.146 \\
128 & 422.48 & 418.65 & 3.83 & 0.291 \\
172 & 559.27 & 555.59 & 3.67 & 0.391 \\

\noalign{\vskip 2pt}
\thickhline
\end{tabular}%

\end{table}

To isolate the memory introduced by pending KEM challenges, we instrument challenge creation and removal and delay the client response so that multiple authentication attempts remain simultaneously pending. The ML-DSA-65 baseline is paused at the corresponding public-key authentication stage. Because independent runs occasionally reach slightly different maximum paused counts, Table~\ref{tab:pending-state-memory} uses separate linear fits to normalize both methods to the same pending count $P$, rather than directly comparing unmatched raw memory peaks. The fitted per-connection growth is approximately 3.11\,MiB for both methods, showing that aggregate memory is dominated by the ordinary OpenSSH process and protocol state. KEMUAuth has an estimated aggregate offset of about 4\,MiB, while its explicitly retained challenge context is only 2384 bytes per pending connection and remains below 0.4\,MiB at the largest observed count of 172. Pending entries return to zero after authentication completion or connection termination. These results show that KEMUAuth adds a small and bounded protocol-specific state without materially changing concurrent memory scaling.

\section{Detailed Evaluation Results}
\label{app:raw-eval}

This appendix provides the numerical results underlying Section~\ref{sec:evaluation}, following the setup in
Section~\ref{subsec:exp-setup}. Tables~\ref{tab:rtt-sensitivity} and~\ref{tab:loss-sensitivity} report sensitivity to configured RTT and random packet loss, respectively. Table~\ref{tab:client-auth-raw} gives the raw data for Figure~\ref{fig:client-auth-latency}; and Tables~\ref{tab:rtt-raw} and \ref{tab:tcp-iw-raw} provide the values underlying Figures~\ref{fig:rtt-latency} and~\ref{fig:initcwnd}, respectively.


\providecommand{\thickhline}{\noalign{\hrule height 0.8pt}}
\providecommand{\spacedhline}{%
  \noalign{\vskip 1.5pt}%
  \hline
  \noalign{\vskip 1.5pt}%
}

\begin{table*}[!t]
\centering
\scriptsize

\begin{minipage}[t]{0.48\textwidth}
\vspace{0pt}
\centering

\caption{End-to-end SSH handshake latency across configured RTTs}
\label{tab:rtt-sensitivity}

\begingroup
\setlength{\tabcolsep}{2.8pt}
\renewcommand{\arraystretch}{1.2}

\begin{tabular}{c|rr|rr|r}
\thickhline
\noalign{\vskip 2pt}

RTT &
\multicolumn{2}{c|}{KEMUAuth} &
\multicolumn{2}{c|}{ML-DSA-65} &
$\Delta_{50}$ \\

(ms) &
$50^{th}$ &
$95^{th}$ &
$50^{th}$ &
$95^{th}$ &
(ms) \\

\spacedhline

0   & 74.23   & 76.59   & 77.57   & 79.15   & 3.34 \\
20  & 329.97  & 331.81  & 331.51  & 333.88  & 1.54 \\
40  & 550.86  & 553.20  & 552.14  & 555.19  & 1.28 \\
60  & 770.96  & 772.55  & 772.16  & 773.85  & 1.20 \\
80  & 990.83  & 993.33  & 992.01  & 994.50  & 1.18 \\
100 & 1211.03 & 1212.57 & 1211.67 & 1214.08 & 0.64 \\
120 & 1431.31 & 1433.31 & 1432.28 & 1434.27 & 0.97 \\
160 & 1870.96 & 1872.57 & 1872.24 & 1874.26 & 1.28 \\
200 & 2310.94 & 2313.86 & 2312.00 & 2314.49 & 1.06 \\

\noalign{\vskip 2pt}
\thickhline
\end{tabular}

\endgroup

\vspace{1.0em}

\caption{End-to-end SSH handshake latency under random packet loss}
\label{tab:loss-sensitivity}

\begingroup
\setlength{\tabcolsep}{3.2pt}
\renewcommand{\arraystretch}{1.2}

\begin{tabular}{cl|rrr|r}
\thickhline
\noalign{\vskip 2pt}

Loss &
Authentication &
$50^{th}$ &
$95^{th}$ &
$99^{th}$ &
TCP Retx. \\

(\%) &
Method &
(ms) &
(ms) &
(ms) &
per Conn. \\

\spacedhline

0   & KEMUAuth  & 840.69 & 842.58  & 843.60  & 0.00 \\
0   & ML-DSA-65 & 841.68 & 843.63  & 844.63  & 0.00 \\

\spacedhline

0.1 & KEMUAuth  & 840.51 & 842.81  & 1115.33 & 0.07 \\
0.1 & ML-DSA-65 & 841.49 & 844.05  & 1119.90 & 0.09 \\

\spacedhline

0.5 & KEMUAuth  & 840.74 & 1116.64 & 1386.77 & 0.35 \\
0.5 & ML-DSA-65 & 841.71 & 1116.99 & 1844.37 & 0.35 \\

\spacedhline

1.0 & KEMUAuth  & 841.00 & 1184.91 & 1865.98 & 0.75 \\
1.0 & ML-DSA-65 & 841.93 & 1123.85 & 1876.07 & 0.78 \\

\spacedhline

2.0 & KEMUAuth  & 841.83 & 1774.46 & 2170.76 & 1.52 \\
2.0 & ML-DSA-65 & 842.77 & 1676.36 & 2140.88 & 1.52 \\

\noalign{\vskip 2pt}
\thickhline
\end{tabular}

\endgroup

\end{minipage}
\hfill
\begin{minipage}[t]{0.48\textwidth}
\vspace{0pt}
\centering

\caption{Raw end-to-end SSH handshake latency with different client-authentication algorithms}
\label{tab:client-auth-raw}

\begingroup
\setlength{\tabcolsep}{4pt}
\renewcommand{\arraystretch}{1.2}

\begin{tabular}{lc|r|rrr}
\thickhline
\noalign{\vskip 2pt}

Client &
Notation &
NIST PQ &
Mean &
$50^{th}$ &
$95^{th}$ \\

Algorithm &
&
Category &
(ms) &
(ms) &
(ms) \\

\spacedhline

Ed25519 &
ed25519 &
$\approx$0 bits &
842.042 &
841.962 &
845.765 \\

\spacedhline

ML-DSA-44 &
md44 &
Level 2 &
841.720 &
841.622 &
844.368 \\

ML-DSA-65 &
md65 &
Level 3 &
841.792 &
841.692 &
845.367 \\

ML-DSA-87 &
md87 &
Level 5 &
841.912 &
841.791 &
845.699 \\

\spacedhline

Falcon-512 &
falcon512 &
Level 1 &
841.746 &
841.656 &
845.233 \\

Falcon-1024 &
falcon1024 &
Level 5 &
841.940 &
841.850 &
846.634 \\

\spacedhline

SLH-DSA-SHA2-128f &
sd128f &
Level 1 &
920.703 &
920.556 &
928.641 \\

SLH-DSA-SHA2-192f &
sd192f &
Level 3 &
929.522 &
929.319 &
934.724 \\

SLH-DSA-SHA2-256f &
sd256f &
Level 5 &
1018.098 &
1017.955 &
1025.191 \\

\spacedhline

ML-KEM-512 &
mk512 &
Level 1 &
840.767 &
840.666 &
843.290 \\

ML-KEM-768 &
mk768 &
Level 3 &
840.779 &
840.681 &
843.480 \\

ML-KEM-1024 &
mk1024 &
Level 5 &
840.805 &
840.694 &
843.578 \\

\spacedhline

Password (yescrypt) &
password &
-- &
785.648 &
785.446 &
790.494 \\

\noalign{\vskip 2pt}
\thickhline
\end{tabular}

\endgroup

\end{minipage}

\end{table*}

\begin{table*}[!t]
\centering
\scriptsize

\caption{Raw end-to-end SSH handshake latency under different RTT regimes}
\label{tab:rtt-raw}

\begingroup
\setlength{\tabcolsep}{3.2pt}
\renewcommand{\arraystretch}{1.25}

\begin{tabular}{l l| c| r r r| r r r| r r r}
\thickhline
\noalign{\vskip 2pt}

Client &
Server &
Notation &
\multicolumn{3}{c|}{Close RTT = 37 ms} &
\multicolumn{3}{c|}{Intermediate RTT = 67 ms} &
\multicolumn{3}{c}{Long RTT = 163 ms} \\

Algorithm &
Algorithm &
&
Mean &
$50^{th}$ &
$95^{th}$ &
Mean &
$50^{th}$ &
$95^{th}$ &
Mean &
$50^{th}$ &
$95^{th}$ \\

&
&
&
(ms) &
(ms) &
(ms) &
(ms) &
(ms) &
(ms) &
(ms) &
(ms) &
(ms) \\

\spacedhline

ed25519 &
ed25519 &
ed25519 &
522.01 &
522.04 &
525.07 &
853.14 &
852.54 &
858.96 &
1911.21 &
1912.88 &
1918.35 \\

\spacedhline

md65 + ed25519 &
md65 + ed25519 &
h-md65 &
524.53 &
524.67 &
527.08 &
855.31 &
854.60 &
861.19 &
1913.69 &
1915.50 &
1920.41 \\

sd192f + ed25519 &
sd192f + ed25519 &
h-sd192f &
644.52 &
645.29 &
649.93 &
1033.83 &
1033.59 &
1042.34 &
2283.98 &
2285.79 &
2292.09 \\

\spacedhline

mk768 + ed25519 &
md65 + ed25519 &
h-mk768-md65 &
523.49 &
523.68 &
526.06 &
852.82 &
852.37 &
858.57 &
1913.37 &
1915.46 &
1920.11 \\

mk768 + ed25519 &
sd192f + ed25519 &
h-mk768-sd192f &
581.58 &
582.00 &
585.53 &
946.05 &
944.51 &
953.30 &
2096.68 &
2098.43 &
2103.70 \\

\spacedhline

ed25519 $\rightarrow$ md65 &
md65 + ed25519 &
ms-md65 &
603.55 &
603.85 &
606.64 &
997.94 &
992.99 &
1000.43 &
2247.47 &
2249.28 &
2254.37 \\

ed25519 $\rightarrow$ mk768 &
md65 + ed25519 &
ms-mk768 &
600.16 &
600.32 &
603.14 &
990.03 &
988.66 &
996.58 &
2244.87 &
2246.80 &
2251.31 \\

\spacedhline

mk768 &
md65 &
mk768-md65 &
518.80 &
518.84 &
521.24 &
849.51 &
848.75 &
854.14 &
1910.46 &
1911.72 &
1916.49 \\

\noalign{\vskip 2pt}
\thickhline
\end{tabular}

\endgroup

\vspace{1.8ex}

\caption{Raw SSH handshake latency under different TCP initial congestion-window sizes}
\label{tab:tcp-iw-raw}

\begingroup
\setlength{\tabcolsep}{3pt}
\renewcommand{\arraystretch}{1.3}

\begin{tabular}{c| r r r | r r r | r r r | r r r | r r r}
\thickhline
\noalign{\vskip 2pt}

\multirow{2}{*}{\begin{tabular}{c}
initcwnd\\
(MSS)
\end{tabular}} &
\multicolumn{3}{c|}{A: ed25519 + ed25519} &
\multicolumn{3}{c|}{B: md65 + md65} &
\multicolumn{3}{c|}{C: sd192f + sd192f} &
\multicolumn{3}{c|}{D: mk768 + md65} &
\multicolumn{3}{c}{E: mk768 + sd192f} \\

&
5th &
50th &
95th &
5th &
50th &
95th &
5th &
50th &
95th &
5th &
50th &
95th &
5th &
50th &
95th \\

\spacedhline

3  & 861.92 & 863.91 & 870.43 & 995.07 & 997.38 & 1002.71 & 1236.59 & 1238.92 & 1246.12 & 925.17 & 927.34 & 932.20 & 1014.53 & 1016.93 & 1021.97 \\
5  & 862.84 & 867.87 & 871.95 & 924.08 & 925.85 & 931.75 & 1167.80 & 1172.74 & 1178.35 & 856.51 & 858.14 & 863.13 & 1012.55 & 1014.71 & 1021.00 \\
7  & 862.07 & 867.14 & 871.34 & 855.44 & 858.63 & 864.56 & 1169.28 & 1173.71 & 1179.30 & 856.67 & 861.01 & 865.06 & 1012.38 & 1017.61 & 1021.15 \\
10 & 859.23 & 864.15 & 868.61 & 860.18 & 864.78 & 867.94 & 1036.09 & 1042.69 & 1047.26 & 859.44 & 863.44 & 867.21 & 945.45 & 950.18 & 955.06 \\
15 & 862.22 & 866.97 & 871.07 & 858.09 & 862.08 & 865.26 & 1038.97 & 1044.33 & 1048.63 & 856.95 & 859.97 & 863.45 & 946.43 & 951.21 & 955.22 \\
20 & 860.24 & 864.87 & 868.92 & 856.57 & 861.04 & 864.85 & 1036.98 & 1043.01 & 1047.72 & 857.35 & 861.69 & 865.77 & 944.99 & 950.44 & 954.85 \\
25 & 859.52 & 864.93 & 869.72 & 857.50 & 862.15 & 866.56 & 968.77 & 975.13 & 980.04 & 859.12 & 864.28 & 867.88 & 948.44 & 953.63 & 958.28 \\
30 & 861.46 & 864.13 & 870.20 & 858.35 & 860.82 & 866.06 & 904.29 & 907.19 & 912.07 & 859.54 & 862.40 & 866.26 & 879.94 & 883.03 & 887.58 \\
35 & 859.94 & 863.49 & 869.28 & 859.60 & 862.38 & 867.48 & 904.52 & 908.02 & 912.84 & 858.80 & 861.37 & 866.01 & 881.66 & 884.58 & 888.73 \\
40 & 859.60 & 861.44 & 867.76 & 857.23 & 859.22 & 864.47 & 901.47 & 904.88 & 910.10 & 856.06 & 858.28 & 863.77 & 879.60 & 882.63 & 888.24 \\
50 & 861.10 & 864.77 & 870.32 & 860.71 & 862.17 & 867.35 & 904.15 & 906.76 & 911.24 & 858.97 & 862.48 & 866.57 & 879.59 & 883.04 & 887.17 \\

\noalign{\vskip 2pt}
\thickhline
\end{tabular}

\endgroup
\end{table*}

\end{document}